\documentclass[letterpaper, 10 pt, conference]{ieeeconf} 
\pdfoutput=1 
\IEEEoverridecommandlockouts
\overrideIEEEmargins
\usepackage[english]{babel}
\usepackage{cite}
\usepackage{amsmath,amssymb,amsfonts}
\usepackage{algorithmic}
\usepackage{graphicx}
\usepackage{subcaption}
\usepackage{textcomp}
\usepackage{amsmath}
\usepackage{xcolor}
\usepackage{units}
\usepackage{booktabs}
\usepackage{comment}
\usepackage[hidelinks]{hyperref}
\usepackage[normalem]{ulem}

\usepackage{amsthm}
\usepackage{flushend}


\newtheorem{theorem}{Theorem}[section]
\newtheorem{lemma}{Lemma}[section]

\newtheorem{definition}[theorem]{Definition}

\newtheorem{assumption}{Assumption}[section]
\newtheorem{problem}{Problem}[section]

\def\BibTeX{{\rm B\kern-.05em{\sc i\kern-.025em b}\kern-.08em
		T\kern-.1667em\lower.7ex\hbox{E}\kern-.125emX}}
\pdfobjcompresslevel=0

\newif\ifmargincomments 
\margincommentstrue

\ifmargincomments

\else

\fi

\newif\ifextendedversion
\extendedversiontrue

\begin{document}
	
	\title{\Large \bf Urgency-aware Routing in Single Origin-destination Itineraries\\ through Artificial Currencies
	}
	
	\author{Leonardo Pedroso, W.P.M.H. (Maurice) Heemels, Mauro Salazar 
		\thanks{Control Systems Technology section, Eindhoven University of Technology, the Netherlands {\tt \footnotesize \{l.pedroso,w.p.m.h.heemels, m.r.u.salazar\}@tue.nl}}%
	}
	
	\maketitle
	\thispagestyle{plain}
	\pagestyle{plain}
	
	\begin{abstract}
		Within mobility systems, the presence of self-interested users can lead to aggregate routing patterns that are far from the societal optimum which could be achieved by centrally controlling the users' choices. In this paper, we design a fair incentive mechanism to steer the selfish behavior of the users to align with the societally optimal aggregate routing. The proposed mechanism is based on an artificial currency that cannot be traded or bought, but only {spent or received} when traveling. Specifically, we consider a parallel-arc network with a single origin and destination node within a repeated game setting whereby each user chooses from one of the available arcs to reach their destination on a daily basis.
		In this framework, taking faster routes comes at a cost, whereas taking slower routes is incentivized by a reward. The users are thus playing against their future selves when choosing their present actions. To capture this complex behavior, we assume the users to be rational and to minimize an urgency-weighted combination of their immediate and future discomfort.
		To design the optimal pricing, we first derive a closed-form expression for the best individual response strategy. Second, we formulate the pricing design problem for each arc to achieve the societally optimal aggregate flows, and reformulate it so that it can be solved with gradient-free optimization methods.
		Our numerical simulations show that it is possible to achieve a near-optimal routing whilst significantly reducing the users’ perceived discomfort when compared to a centralized optimal but urgency-unaware policy.
	\end{abstract}
	

\section{Introduction}

This paper delves into the challenges confronting present mobility systems, including traffic congestion, environmental pollution, and user dissatisfaction. The advent of cutting-edge technologies such as the internet of things and autonomous driving is ushering in a transformative era in the way we conceptualize mobility, providing an unprecedented opportunity to tackle these challenges. Nonetheless, a fundamental issue is the inherent misalignment between individual objectives, such as minimizing travel time, and societal goals, such as reducing the overall congestion and pollution, which can result in inefficient aggregate routing patterns \cite{RoughgardenTardos2002}.

To address this challenge, this paper employs an incentive scheme, initially proposed in~\cite{CensiBolognaniEtAl2019,SalazarPaccagnanEtAl2021}, based on an artificial currency, Karma, which is designed to align the routing decisions of self-interested users with the socially-optimal aggregate routing while taking into account their temporal individual needs. This innovative framework operates on a currency that can neither be purchased nor traded but can only be gained or expended while traveling. Specifically, we consider a parallel-arc network with a single origin and destination node, which is illustrated in Fig.~\ref{fig:network_example}, whereby each user chooses from one of the available arcs to reach their destination on a daily basis. It empowers users with an equal opportunity to choose between being self-interested and selecting the fastest path for a price or being altruistic and choosing a slower path for a reward.

\begin{figure}[t]
	\centering
	\includegraphics[width = 0.9\linewidth]{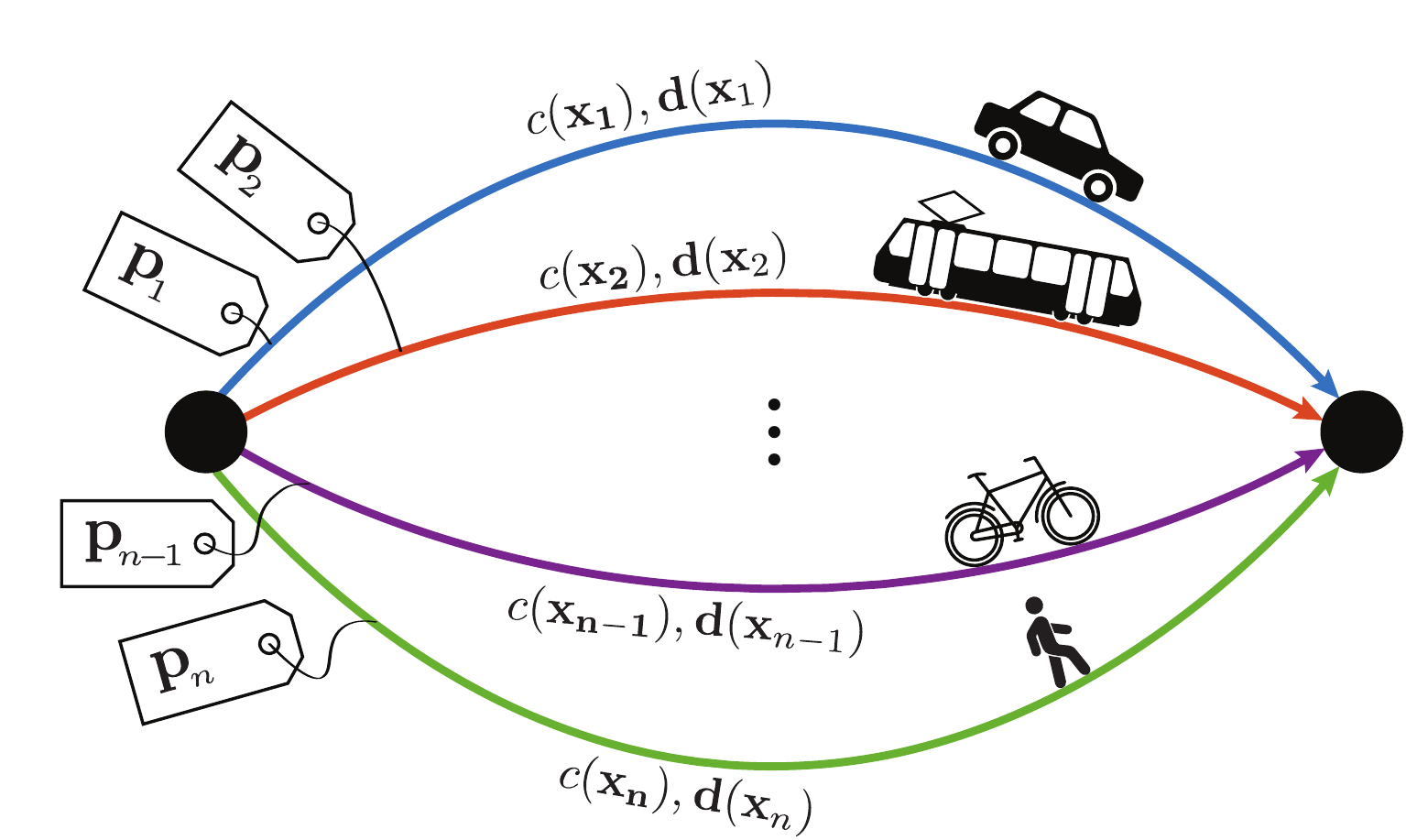}
	\caption{Single origin-destination network of $n$ arcs.}
	\label{fig:network_example}
\end{figure}

\emph{Related work:} 
The toll design problem, which dates back to Pigou's work~\cite{Pigou1920}, has been extensively studied~\cite{Morrison1986,BergendorffHearnEtAl1997}. However, designing tolling mechanisms that account for user sensitivity distribution is a challenging task, for which some results have also been proposed~\cite{FleischerJainEtAl2004,PaccagnanChandanEtAl2019}. Moreover, monetary schemes are intrinsically unfair, as they discriminate against users with lower incomes.

To address this issue, significant attention has been given to the use of artificial currencies to align the aggregate behavior of self-interested users with the system's optimum~\cite{Prendergast2016,GorokhBanerjeeEtAl2019,CensiBolognaniEtAl2019,ElokdaCenedeseEtAl2022}. Nevertheless, these works focus on auction mechanisms forcing users to submit bids every time they desire to use a resource, which may lead to decision fatigue.
Moreover, users are never guaranteed whether they will be able to use the resource or whether they are going to be outbidden. 
Our approach deviates significantly from these mechanisms: We propose simple payment transactions where each itinerary has a fixed cost or reward, thus requiring no bidding whilst accounting for the users' sensitivity by endowing them with freedom of decision as long as they have enough Karma to pay for the desired resource.
In this context, we carried out work for the particular case of two arcs between common origin and destinations nodes in~\cite{SalazarPaccagnanEtAl2021} and, using a reinforcement learning approach, for two and three arcs in~\cite{SandenSchoukensEtAl2023}. Nevertheless, to the best of the author's knowledge, no mechanism has been proposed to cope with the more general $n$ parallel arcs scenario.


\emph{Statement of contributions:} 
The pivotal contributions in this paper are threefold: Focusing on a repeated game setting with $n$ parallel arcs, we first derive a closed-form solution for the best response strategy of a user, enabling an explicit analysis of the repeated game dynamics. Second, we model the aggregate Karma level dynamics under a stationary aggregate routing pattern as an aggregate of Markov chains, bridging the gap between the pricing policy and the aggregate routing pattern. Third, building on the two previous results, we propose a numerical pricing design {procedure driving the aggregate decisions to the system's optimum.} 

\emph{Organization:} This paper is organized as follows. Section~\ref{sec:prob_statement} states the mechanism design problem. In Section~\ref{sec:bestresponse}, the best response strategy of each individual user is analyzed and a closed-form solution is derived, whilst we model the aggregate behavior resulting from the microscopic users' decisions in Section~\ref{sec:mesoscopic}. In Section~\ref{sec:pricing_design}, we devise a pricing design method, whose performance is assessed in Section~\ref{sec:num_res} resorting to numerical simulations. Finally, Section~\ref{sec:concl} presents the main conclusions of this paper.

\emph{Notation:} Throughout this paper, we denote the identity and null matrices, both of appropriate dimensions, by $\mathbf{I}$ and $\mathbf{0}$, respectively. The vectors of ones and zeros, both of appropriate dimensions, are denoted by $\mathbf{1}$ and $\mathbf{0}$. The $i$th component of a vector $\mathbf{v} \in \mathbb{R}^n$ is denoted by $\mathbf{v}_i$. The vector $\mathbf{e_i}$ denotes a column vector whose entries are all set to zero except for the $i$th one, which is set to 1. We denote $[\cdot]_x^y$ as the saturation function with lower bound $x \in \mathbb{R}$ and upper bound $y \in \mathbb{R}$. The cardinality of a set $\mathcal{A}$ is denoted by $|\mathcal{A}|$. The expected value of a random variable $X$ is denoted by~$\mathbb{E}[X]$.


\section{Problem Statement}\label{sec:prob_statement}

This section states the mechanism design problem, which closely follows the formulation in \cite{SalazarPaccagnanEtAl2021}. Three different perspectives are of interest, each corresponding to a subsection: i)~the macroscopic perspective of the central operator, aiming to minimize the societal costs that result from the routing patterns; ii)~the microscopic perspective of each self-interested user, who desires to minimize their daily perceived discomfort;  and iii)~the mesoscopic overarching perspective of the incentive mechanism design framework to align these two seemingly opposing objectives.


Consider the mobility network  with a single origin and destination node connected by $n\in \mathbb{N}$ distinct itineraries depicted in Fig.~\ref{fig:network_example}. We consider a repeated game setting whereby each user chooses from one of the available arcs to reach their destination at each discrete time $t\in \mathbb{N}$.

The incentive mechanism that is employed is based on an artificial currency---Karma. In this framework, taking a particular itinerary comes at a cost of Karma. Let $\mathbf{p}\in \mathbb{R}^n$ denote the prices of the itineraries, i.e., choosing arc $j$ comes at a cost $\mathbf{p}_j$. Users are not allowed to buy or trade Karma, and they can only select arcs that maintain their Karma-level non-negative. It is, hence, crucial that certain (more uncomfortable) arcs are assigned negative prices, which means that users are awarded Karma for taking them.

From a microscopic perspective, denote the route choice of a user $i$ at time instant $t$ by the binary vector $\mathbf{y}^i(t) \in \{0,1\}^n$, whose entry $j$ is given by $\mathbf{y}^i_j(t) = 1$ if the user $i$ chooses arc $j$ at time $t$, and $\mathbf{y}^i_j(t) = 0$ otherwise. Since each user may not travel at time $t$ we can have $\mathbf{y}^i(t) = \mathbf{0}$. Therefore, it follows that $\mathbf{1}^\top\mathbf{y}^i(t) \leq 1$. Let $k^i(t) \in \mathbb{R}_{\geq0}$ denote the amount of Karma that a user $i$ owns at time $t$. Following a routing choice the Karma level is updated according to $k^i(t+1) = k^i(t)-\mathbf{p}^\top\mathbf{y}^i(t)$. 

From a mesoscopic perspective, let $\mathbf{x}(t) \in [0,1]^n$ denote the fraction of users crossing each arc at time $t$. Given a scenario with $M$ users, it is defined as $\mathbf{x}(t) = \frac{1}{M}\sum_{i=1}^M\mathbf{y}^i(t)$. To account for non-traveling users, it is assumed that each user has a constant probability $P_{\text{home}} \in [0,1]$  of not traveling.  Conversely, the probability for a user to travel is $P_{\text{go}}= 1-P_{\text{home}}$ and $\mathbb{E}\left[\mathbf{1}^\top\mathbf{x}(t)\right] = P_{\text{go}}$.

\subsection{Central Operator's Problem}
From the macroscopic perspective of the central operator of the mobility network, the flows across each arc cause a {societal} cost. Let $\mathbf{c}:[0,1]^n \to \mathbb{R}_{\geq 0}^n$ denote the {societal} cost function. It models the {societal} cost of each arc $j$ per user, $\mathbf{c}_j(\mathbf{x}_j(t))$, which is assumed to be monotonically increasing with the fraction  $\mathbf{x}_j(t)$ of users taking it. The desire of the central operator is that the aggregate flows minimize the total {societal} cost {$C(\mathbf{x}):=\mathbf{c}(\mathbf{x})^\top\mathbf{x}$}, which is formulated in the following problem:

\begin{problem}[Central Operator's Problem]\label{prb:design}\color{black}
	The central operator aims at routing customers so that the aggregate flows are
	\begin{equation*}
		\begin{split}
			\mathbf{x}^{\star} \in &\arg \min \limits_{\mathbf{x}\in [0,1]^n}  {C(\mathbf{x})}\\
			& \;\mathrm{s.t.} \phantom{ \min \limits_{x\in [0,1]^n} } \mathbf{1}^\top\mathbf{x} = P_{\mathrm{go}}.
		\end{split}
	\end{equation*}
\end{problem}

\subsection{Individual User's Problem}
From the microscopic perspective of each user, taking an itinerary $j$ comes with a discomfort. Let ${\mathbf{d}:[0,1]^n \to \mathbb{R}_{\geq 0}^n}$ denote the user's discomfort function. It  models the discomfort that stems from taking arc $j$ per user, $\mathbf{d}_j(\mathbf{x}_j(t))$, which is assumed to be monotonically increasing with the fraction of users taking it, $\mathbf{x}_j(t)$.
In contrast to well-known monetary tolling schemes~\cite{BergendorffHearnEtAl1997}, the individual users' complex behavior cannot be captured within a static setting:
From their self-interested view-point, the users are assumed to make choices in order to minimize their traveling discomfort without reaching a negative level of Karma. Thus, the users are playing against their future selves when deciding whether to spend or receive Karma. Furthermore, the perception of discomfort of a user varies daily. The sensitivity to discomfort of a user $i$ at time $t$ is denoted by $s^i(t)$, which is a weighting factor of the daily discomfort.  The sensitivities $s^i(t)$ are assumed to be i.i.d. extractions (w.r.t. $i$ and $t$) of a common probability density function $\rho:[s_{\mathrm{min}},s_{\mathrm{max}}] \to [0,1]$ with support set  $[s_{\mathrm{min}},s_{\mathrm{max}}] \subseteq  \mathbb{R}_{\geq 0}$ and expected value $\bar{s} \in \mathbb{R}_{\geq0}$.  To capture this complex behavior, the users are assumed to be rational and to minimize a combination of their immediate discomfort, weighted by their immediate urgency, and the discomfort encountered for a time period $T$ into the future, weighted by their average urgency. Additionally, we assume each user $i$ to be conservative in terms of Karma, i.e., they will make the route decisions so that their Karma at the end of the horizon will not fall below a reference value ${k_{\mathrm{ref}}^i \in \mathbb{R}_{\geq0}}$. For example, a user may choose a reference value of $\mathbf{p}_1$ to ensure that {they} can still afford to travel in arc $1$ at the end of the horizon. It, thus, depends on the pricing policy $\mathbf{p}$.  Herein, we will assume it to be time-invariant and randomly distributed among the users according to the distribution $\theta_{\mathbf{p}}:\mathbb{R}_{\geq 0} \to\mathbb{R}_{\geq0}$.
{Formally, we obtain} the following individual user's problem:


\begin{problem}[Individual User's Problem]\label{prb:individual}
	At time $t \in \mathbb{N}$, given the flows $\mathbf{x}$ and prices $\mathbf{p}$ a traveling user with Karma level $k\geq 0$, reference $k_{\mathrm{ref}}$, and sensitivity $s$ will choose {their} route as $\mathbf{y}^{\star}$ resulting from
		\begin{subequations}\label{eq:singleAgentAverage}
			\begin{align}\label{eq:singleAgentAverage_cost}
				(\mathbf{y}^\star,\bar{\mathbf{y}}^\star) \in \mathop{\mathrm{argmin}}_{\mathbf{y} \in \mathcal{Y},\;\bar{\mathbf{y}} \in \bar{\mathcal{Y}}} \;&s\, \mathbf{d}(\mathbf{x})^\top\mathbf{y}+ T\,\bar{s}\, \mathbf{d}(\mathbf{x})^\top \bar{\mathbf{y}}\\  \label{eq:singleAgentAverage_c1}
				\mathrm{s.t.}\;\; &k-\mathbf{p}^\top \mathbf{y} - T\mathbf{p}^\top\bar{\mathbf{y}} \geq k_\mathrm{ref}\\ \label{eq:singleAgentAverage_c2}
				&\mathbf{p}^\top \mathbf{y} \leq k,
			\end{align}
		\end{subequations}
	with $ \mathcal{Y} = \{\mathbf{y}\in\{0,1\}^n: \mathbf{1}^\top \mathbf{y} = 1\}$ and $\bar{\mathcal{Y}} =\{\mathbf{y}\in[0,1]^n:\mathbf{1}^\top \mathbf{y}= 1\}$.
	Non-traveling users have $\mathbf{y}^\star = \mathbf{0}$.
\end{problem}

\subsection{Mechanism Design Problem}

Similar to~\cite{SalazarPaccagnanEtAl2021} and following the notation in~\cite{GorokhBanerjeeEtAl2019}, we consider a non-atomic game framework, which corresponds to the limit case where users form a continuum with ${M\to \infty}$. To describe an infinite-user population, let ${\eta_t : \mathbb{R}_{\geq 0}  \times \mathbb{R}_{\geq 0} \to \mathbb{R}_{\geq 0}}$ denote the instantaneous distribution of the Karma level and reference in the population at time $t$, where $\int_0^{\infty} \!\!\int_0^\infty \eta_t (k,k_\mathrm{ref}) \,\mathrm{d}k \, \mathrm{d}k_\mathrm{ref} = 1$. For the infinite-user setting, the Nash and Wardrop Equilibrium (WE) are identical~\cite{PaccagnanGentileEtAl2018} and can be defined as follows:

\begin{definition}[Wardrop Equilibrium]\label{def:WE} \color{black}
	$\mathbf{x}^\mathrm{WE}(t)\in[0,1]^n$ satisfying $\mathbf{1}^\top \mathbf{x}^\mathrm{WE}(t) = P_\mathrm{go}$ is a WE at time $t$, if 
	{\small
		\begin{equation*}
			\begin{split}
				&\mathbf{x}^\mathrm{WE}(t)  =\\
				& \int_{0}^\infty \!\! \!\! \int_{0}^\infty \!\!\!\! 	\int^{s_\mathrm{max}}_{s_\mathrm{min}}\!\!\!\! \mathbf{y}^{\star}\!(\mathbf{x}^\mathrm{WE}(t),s,k,k_\mathrm{ref})\,\rho(s)\,\eta_t(k,k_{\mathrm{ref}})\, \mathrm{d}s\,\mathrm{d}k\,\mathrm{d}k_{\mathrm{ref}},
			\end{split}
	\end{equation*}}%
	where $\mathbf{y}^\star$ is a best response strategy that follows from Problem~\ref{prb:individual}.
\end{definition}


At each time $t$ the aggregate choices of the users are modeled by the WE $\mathbf{x}^{\mathrm{WE}}(t)$. Fig.~\ref{fig:scheme_karma} depicts a scheme of a time-step of the overall model. The mechanism design problem is then to select the arc prices, so that the daily WE converges to the system optimum $\mathbf{x}^\star$, as stated in the problem below:

\begin{figure}[ht]
	\centering
	\includegraphics[width = 0.8\linewidth]{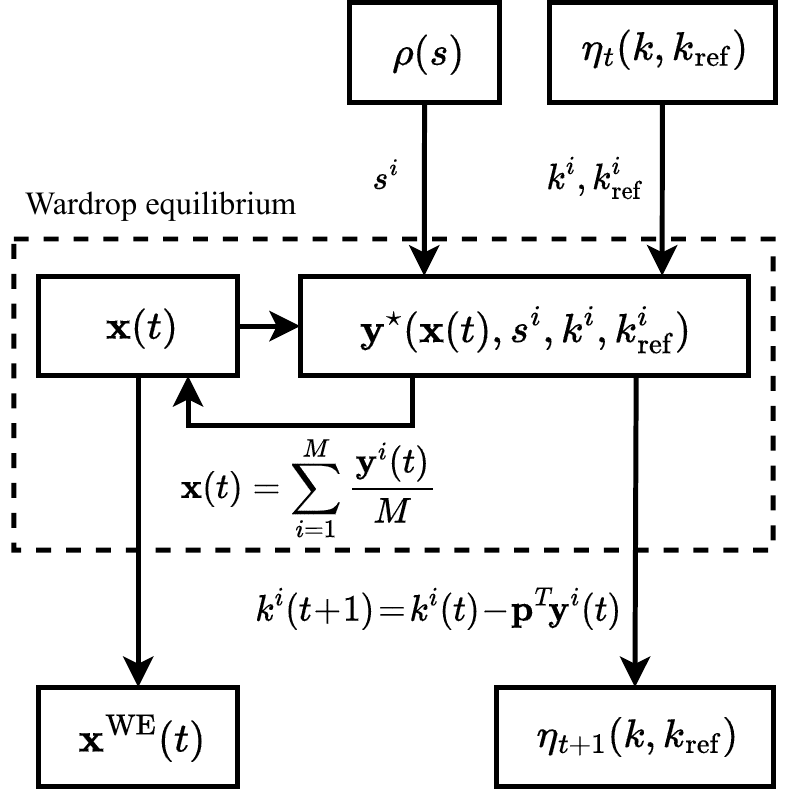}
	\caption{Schematic representation of one time-step of the overall model.}
	\label{fig:scheme_karma}
\end{figure}

\begin{problem}[Pricing Problem]\label{prb:prices}
	Given a desired system optimum $\mathbf{x}^\star$, select $\mathbf{p}\in\mathbb{R}^n$ so that $\lim_{t\to\infty} \mathbf{x}^\mathrm{WE}(t) = \mathbf{x}^\star$.
\end{problem}

To ensure the well-posedness of the pricing problem, the following key assumptions are made on the existence, uniqueness, and convergence of a WE. Given that the best response strategy cannot be formulated in a static setting, and a mixed user strategy is not meaningful, these assumptions are, by no means, obvious. Future research endeavors will focus on the intricacies of the game-theoretic framework, whereas, in this paper, the focus is on the pricing design framework.


\begin{assumption}[Existence and uniqueness of WE]\label{ass:WE}
	Given a Karma level distribution $\eta_t : \mathbb{R}_{\geq 0}  \times \mathbb{R}_{\geq 0} \to \mathbb{R}_{\geq 0}$, a WE $\mathbf{x}^\mathrm{WE}(t)$ exists and is unique.
\end{assumption}

\begin{assumption}[Convergence of WE]\label{ass:WE_convergence}
	For a given pricing strategy $\mathbf{p}$, a stationary WE exists, i.e., ${\mathbf{x}^\mathrm{WE}_{\infty} := \lim_{t\to\infty} \mathbf{x}^\mathrm{WE}(t)}$, irrespective of the initial Karma level distribution $\eta_0 : \mathbb{R}_{\geq 0}  \times \mathbb{R}_{\geq 0} \to \mathbb{R}_{\geq 0}$.
\end{assumption}

\section{Best Response Strategy}\label{sec:bestresponse}

In this section, we focus on the individual user's problem. Specifically, we examine its properties and derive a closed-form solution of the best response strategy, which we will prove to be of paramount importance to the pricing design procedure proposed in this paper. The following result details necessary and sufficient conditions for the feasibility of Problem~\ref{prb:individual}.

\begin{lemma}\label{lem:feasibility}
	Consider a traveling user with Karma $k$, sensitivity $s$, Karma reference $k_\mathrm{ref}$, and prices $\mathbf{p}$. Problem~\ref{prb:individual} is feasible if and only if {\small $k \geq \max(0,k_\mathrm{ref}+(\min_j\mathbf{p}_j)(T+1))$}.
\end{lemma}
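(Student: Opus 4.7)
The plan is to reduce the feasibility question of Problem~\ref{prb:individual} to an elementary optimization: since both constraints \eqref{eq:singleAgentAverage_c1} and \eqref{eq:singleAgentAverage_c2} are linear in $(\mathbf{y},\bar{\mathbf{y}})$, a feasible pair exists if and only if the minimum of each left-hand side over the feasible set $\mathcal{Y}\times\bar{\mathcal{Y}}$ can meet the required inequalities simultaneously. The key observation is that, over $\mathbf{y}\in\mathcal{Y}$ and $\bar{\mathbf{y}}\in\bar{\mathcal{Y}}$, the inner products $\mathbf{p}^\top\mathbf{y}$ and $\mathbf{p}^\top\bar{\mathbf{y}}$ are both minimized by placing all mass on an arc $j^{\star}\in\arg\min_j \mathbf{p}_j$, yielding a minimum value of $\min_j\mathbf{p}_j$ in each case.

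For necessity, I would pick any feasible $(\mathbf{y},\bar{\mathbf{y}})$ and use \eqref{eq:singleAgentAverage_c1} together with the bound $\mathbf{p}^\top\mathbf{y}+T\mathbf{p}^\top\bar{\mathbf{y}}\geq (T+1)\min_j\mathbf{p}_j$ to conclude $k\geq k_\mathrm{ref}+(T+1)\min_j\mathbf{p}_j$; combining this with the standing assumption $k\geq 0$ (inherited from the non-negativity of Karma levels in the problem statement) gives the desired $\max$ expression.

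For sufficiency, I would exhibit the explicit candidate $\mathbf{y}=\bar{\mathbf{y}}=\mathbf{e}_{j^{\star}}$ with $j^{\star}\in\arg\min_j\mathbf{p}_j$, so that $\mathbf{p}^\top\mathbf{y}=\mathbf{p}^\top\bar{\mathbf{y}}=\min_j\mathbf{p}_j$. Constraint \eqref{eq:singleAgentAverage_c1} then reduces exactly to the hypothesis. For \eqref{eq:singleAgentAverage_c2}, I would split into cases: if $\min_j\mathbf{p}_j\leq 0$, the inequality $\min_j\mathbf{p}_j\leq k$ is immediate from $k\geq 0$; if $\min_j\mathbf{p}_j>0$, I would use that $k_\mathrm{ref}\geq 0$ and $T\geq 0$ imply $k_\mathrm{ref}+(T+1)\min_j\mathbf{p}_j\geq \min_j\mathbf{p}_j$, so the hypothesis already dominates the required bound.

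The only mildly subtle point, which I expect to be the main thing to get right, is checking that the candidate satisfying \eqref{eq:singleAgentAverage_c1} automatically satisfies \eqref{eq:singleAgentAverage_c2} without additional assumptions; the case split on the sign of $\min_j\mathbf{p}_j$ and the use of $k\geq 0$ and $k_\mathrm{ref}\geq 0$ handle this cleanly, and the rest of the argument is just linear-programming bookkeeping.
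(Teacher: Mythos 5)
Your proposal is correct and follows essentially the same route as the paper: the same explicit candidate $\mathbf{y}=\bar{\mathbf{y}}=\mathbf{e}_{\mathop{\mathrm{argmin}}_j \mathbf{p}_j}$ for sufficiency, and the same lower bounds $\mathbf{p}^\top\mathbf{y}\geq\min_j\mathbf{p}_j$, $\mathbf{p}^\top\bar{\mathbf{y}}\geq\min_j\mathbf{p}_j$ applied to constraint \eqref{eq:singleAgentAverage_c1} for necessity. Your explicit case split on the sign of $\min_j\mathbf{p}_j$ to verify \eqref{eq:singleAgentAverage_c2} for the candidate is a detail the paper leaves implicit, but it is the same argument.
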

\begin{proof}
	The proof can be found in
	\ifextendedversion
	Appendix~\ref{app:proof_feasibility}.
	\else
	the extended version of this paper~\cite{extendedversion}.
	\fi
\end{proof}

{To derive a closed-form solution to Problem~\ref{prb:individual}, we} follow a divide-and-conquer approach. {In a first instance}, in the following theorem, we establish an equivalence between the solutions of Problem~\ref{prb:individual} and a reduced best response problem whose discomforts and prices can be strictly ordered. {More specifically, we make three statements about Problem~\ref{prb:individual}. First, if, for a given arc $j$, there exists an arc $i$ with strictly lower discomfort and cost, then arc $j$ is unreasonable, in the sense that it is never chosen. Second, the discomforts and cost of the reduced set of arcs that are not unreasonable and have distinct discomfort values can be strictly ordered. Third, all integer solutions of Problem~\ref{prb:individual} can be obtained by the solutions to a reduced best response problem, whose discomforts and prices can be strictly ordered. These statements are presented with rigor in the following theorem.}



\begin{theorem}\label{thm:brs_general}
	Consider a traveling user with Karma $k$, sensitivity $s$, and Karma reference $k_\mathrm{ref}$, aggregate flow $\mathbf{x}$, and prices $\mathbf{p}$. Assume, without loss of generality, that the itineraries are numbered, so that $\mathbf{d}_1(\mathbf{x}_1) \leq  \ldots \leq \mathbf{d}_n(\mathbf{x}_n)$ is satisfied. Then, {under the feasibility conditions of Lemma~\ref{lem:feasibility}}:\\
	i) $j^\star \notin \mathcal{J}_u$, where $\mathbf{y}^\star = \mathbf{e_{j^\star}}$ and {\small 	$\mathcal{J}_u := \{ j\in \{1,\ldots,n\} \;|\; \exists i \in  \{1, \ldots n\}: \mathbf{p}_i  \leq  \mathbf{p}_j \land \mathbf{d}_i(\mathbf{x}_i) < \mathbf{d}_j(\mathbf{x}_j) \};$}\\
	ii) {\small $\!\forall i,j  \!\in\!\! \{1,\ldots,n\} \!\setminus \!( \mathcal{J}_u \cup \mathcal{J}_e) \,j\!<\! i  \Longrightarrow \mathbf{p}_j\! > \!\mathbf{p}_i \land \mathbf{d}_j(\mathbf{x}_j)\! > \!\mathbf{d}_i(\mathbf{x}_i)$}, where 	
	{\small \begin{equation*}
			\begin{split}
				&\mathcal{J}_e \!:= \!\left\{ j\!\in \!\{1,\ldots,n\}  \,|\, \exists i \! \in \! \{1, \ldots n\} : \right.\\&\quad \quad \quad \quad\quad 
				\left. \mathbf{d}_i(\mathbf{x}_i) \!= \!\mathbf{d}_j(\mathbf{x}_j) \land (\mathbf{p}_i <\mathbf{p}_j \lor (\mathbf{p}_i  = \mathbf{p}_j  \land i<j) )\right\}.
			\end{split}
	\end{equation*}}\\
	iii) if $(\mathbf{e_{q}}, \mathbf{\bar{y}^{q}})$, $q = 1,\ldots, Q$ are all the solutions to Problem~\ref{prb:individual} for {reduced} aggregate flows $ \{\mathbf{x}_j\}_{j\in \{1,\ldots,n\} \setminus ( \mathcal{J}_u \cup  \mathcal{J}_e) }$, and {reduced} prices $\{\mathbf{p}_j\}_{j\in \{1,\ldots,n\} \setminus ( \mathcal{J}_u \cup  \mathcal{J}_e) }$, then $\mathbf{y}^\star = \mathbf{e_{j^\star}}$ with
	{\small \begin{equation*}
			\begin{split}
				j^\star \in \bigg\{ j\in \{1,\ldots,n\}| \exists q \in \{1,\ldots,Q\} : \mathbf{d}_j(\mathbf{x}_j) = \mathbf{d}_{q^{\star}}(\mathbf{x}_{q^{\star}}) \land  \\ k \geq \mathbf{p}_j \land 
				\left .k - \mathbf{p}_j -T\sum \nolimits_{i\in \{1,\ldots,n\} \setminus ( \mathcal{J}_u \cup  \mathcal{J}_e)} \mathbf{p}_i  \mathbf{\bar{y}^{q}}_i \geq k_{\mathrm{ref}}\right\}
			\end{split}
	\end{equation*}}%
	are all the integer solutions to Problem~\ref{prb:individual}.
\end{theorem}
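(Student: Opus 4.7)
My plan is to prove the three statements sequentially, since they build on one another, and to exploit throughout the common idea of a dominance/exchange argument applied to the integer arc choice $\mathbf{y}\in\mathcal{Y}$ and the continuous horizon choice $\bar{\mathbf{y}}\in\bar{\mathcal{Y}}$. The leverage comes from the observation that in both constraints~\eqref{eq:singleAgentAverage_c1}--\eqref{eq:singleAgentAverage_c2} and the objective~\eqref{eq:singleAgentAverage_cost}, an arc $j$ with higher (or equal) price and higher (or equal) discomfort than some other arc $i$ is dominated and can be swapped out while preserving (or strictly improving) feasibility and objective value.

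Statement (i) follows by contradiction: supposing $\mathbf{y}^\star=\mathbf{e_{j^\star}}$ with $j^\star\in\mathcal{J}_u$, pick the witnessing $i$ with $\mathbf{p}_i\leq\mathbf{p}_{j^\star}$ and $\mathbf{d}_i(\mathbf{x}_i)<\mathbf{d}_{j^\star}(\mathbf{x}_{j^\star})$. The swap $\mathbf{e_{j^\star}}\mapsto\mathbf{e_i}$ leaves $\bar{\mathbf{y}}^\star$ feasible, because the inequality $\mathbf{p}_i\leq\mathbf{p}_{j^\star}$ preserves both \eqref{eq:singleAgentAverage_c1} and \eqref{eq:singleAgentAverage_c2}, while the immediate-discomfort term is reduced, contradicting optimality. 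Statement (ii) is a short case analysis on a pair $j<i$ in the reduced set: the sorted numbering gives $\mathbf{d}_j(\mathbf{x}_j)\leq \mathbf{d}_i(\mathbf{x}_i)$; equality would force $j$ or $i$ into $\mathcal{J}_e$ (depending on the price comparison, with the index tie-breaker resolving equal prices), and a non-strict price inequality $\mathbf{p}_j\leq\mathbf{p}_i$ combined with $\mathbf{d}_j(\mathbf{x}_j)<\mathbf{d}_i(\mathbf{x}_i)$ would force $i$ into $\mathcal{J}_u$, so both inequalities must in fact be strict.

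The bulk of the work is statement (iii), which I would prove by two-way inclusion of solution sets. For the forward direction, any integer optimum $(\mathbf{e_{j^\star}},\bar{\mathbf{y}}^\star)$ of the full problem has $j^\star\notin\mathcal{J}_u$ by (i); if additionally $j^\star\in\mathcal{J}_e$, I replace it with the discomfort-matching witness of lower (or equal) price given by the definition of $\mathcal{J}_e$, which leaves the objective unchanged and relaxes \eqref{eq:singleAgentAverage_c1}--\eqref{eq:singleAgentAverage_c2}, iterating until the integer arc lies outside $\mathcal{J}_u\cup\mathcal{J}_e$. An analogous redistribution of the $\bar{\mathbf{y}}^\star$ mass off arcs in $\mathcal{J}_u\cup\mathcal{J}_e$ onto their discomfort-matching, lower-priced counterparts yields a feasible pair $(\mathbf{e_q},\bar{\mathbf{y}}^q)$ in the reduced problem with the same objective value, hence a reduced-problem optimum. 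For the reverse direction, given a reduced-problem solution $(\mathbf{e_q},\bar{\mathbf{y}}^q)$, any full-problem arc $j$ with $\mathbf{d}_j(\mathbf{x}_j)=\mathbf{d}_q(\mathbf{x}_q)$ that satisfies $k\geq \mathbf{p}_j$ and the horizon constraint listed in the theorem can be substituted for $\mathbf{e_q}$ without changing the objective, yielding an integer optimum of the original problem. The main obstacle I foresee is tracking the coupling between $\mathbf{y}$ and $\bar{\mathbf{y}}$ through~\eqref{eq:singleAgentAverage_c1}: I need to argue that the two exchanges can be performed simultaneously so that the total objective, not just one of its two terms, is preserved, and that every integer optimum is reachable from some reduced-problem optimum via this collapse operation.
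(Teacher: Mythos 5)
Your proposal is correct and follows essentially the same route as the paper's proof: contradiction via an arc swap for (i), the same short case analysis for (ii), and a two-way inclusion for (iii) in which the integer arc and the $\bar{\mathbf{y}}$ mass are collapsed onto the cheapest discomfort-matching representatives (the paper phrases this as ``adding together the entries corresponding to arcs with equal discomfort''). The coupling issue you flag at the end is resolved by noting that moving mass to an arc of equal discomfort and lower price leaves the objective unchanged while only relaxing constraint~\eqref{eq:singleAgentAverage_c1}, which is exactly the paper's argument.
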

\begin{proof}
	The proof can be found in
	\ifextendedversion
	Appendix~\ref{app:proof_brs_general}.
	\else
	the extended version of this paper~\cite{extendedversion}.
	\fi
\end{proof}

%

{A few remarks are in order regarding Theorem~\ref{thm:brs_general}. First, depending on the prices and discomforts at a given time, there may be arcs that are not chosen for any sensitivity or Karma level. 
Second, if two arcs have the same discomfort, albeit possibly different prices, both are {equally fit} for a sufficiently high level of Karma. Third, even though similar equivalence conditions could have been stated for the non-integer component of the solutions, they were omitted for the sake of brevity. In a second instance, in the following theorem, a closed-form solution is presented for the aforementioned reduced problems.}



\begin{theorem}\label{thm:brs}
	Consider a traveling user with Karma $k$, sensitivity $s$, and Karma reference $k_\mathrm{ref}$, an aggregate flow $\mathbf{x}$, and prices $\mathbf{p}$.   Assume that  $\mathbf{d}_1(\mathbf{x}_1) <  \ldots < \mathbf{d}_n(\mathbf{x}_n)$ and $\mathbf{p}_1 > \ldots >\mathbf{p}_n$. Let $k(j_1,j_2):= k_{\mathrm{ref}} + \mathbf{p}_{j_1} + T\mathbf{p}_{j_2}$,
	\ifextendedversion
	\begin{equation}\label{eq:j_hat}
		\hat{j}_a := \mathop{\mathrm{argmin}}_{\substack{i\in \{1,\ldots,n\} \setminus \{a\}\\  k \geq \min(k(j,a),k(j,i)) \\ k \leq \max(k(j,a),k(j,i)) }}\frac{\mathbf{d}_i(\mathbf{x}_i)-\mathbf{d}_a(\mathbf{x}_a)}{\mathbf{p}_a-\mathbf{p}_i},
	\end{equation}
	\else
	\begin{equation*}
		\hat{j}_a := \mathop{\mathrm{argmin}}_{\substack{i\in \{1,\ldots,n\} \setminus \{a\}\\  k \geq \min(k(j,a),k(j,i)) \\ k \leq \max(k(j,a),k(j,i)) }}\frac{\mathbf{d}_i(\mathbf{x}_i)-\mathbf{d}_a(\mathbf{x}_a)}{\mathbf{p}_a-\mathbf{p}_i},
	\end{equation*}
	\fi
	\ifextendedversion
	\begin{equation}\label{eq:y_j_a_star}
		\mathbf{\bar{y}}^\star(j,a) := \frac{(k-k(j,\hat{j}_a))\mathbf{e_a} - (k-k(j,a))\mathbf{e_{\hat{j}_a}}}{T(\mathbf{p}_a-\mathbf{p}_{\hat{j}_a})},
	\end{equation}
	\else
	\begin{equation*}
		\mathbf{\bar{y}}^\star(j,a) := \frac{(k-k(j,\hat{j}_a))\mathbf{e_a} - (k-k(j,a))\mathbf{e_{\hat{j}_a}}}{T(\mathbf{p}_a-\mathbf{p}_{\hat{j}_a})},
	\end{equation*}
	\fi
	\ifextendedversion
	\begin{equation}\label{eq:a_hat}
		\hat{a} := \mathop{\mathrm{argmin}} \limits_{a\in \{i,\ldots,n\}} \mathbf{d}(\mathbf{x})^\top	\mathbf{\bar{y}}^\star(j,a),
	\end{equation}
	\else
	\begin{equation*}
		\hat{a} := \mathop{\mathrm{argmin}} \limits_{\substack{a\in \{i,\ldots,n\} \\ k \geq \min(k(j,a),k(j,\hat{j}_a)) \\  k \leq \max(k(j,a),k(j,\hat{j}_a))}} \mathbf{d}(\mathbf{x})^\top	\mathbf{\bar{y}}^\star(j,a),
	\end{equation*}
	\fi
	\begin{equation*}
		\mathbf{\bar{y}_j}^\star := \begin{cases}
			\mathbf{\bar{y}}^\star(j,\hat{a}), \, & k < k(j,1)\\
			\mathbf{e_1}, \, & k \geq k(j,1)
		\end{cases},
	\end{equation*}
	\ifextendedversion
	\begin{equation}\label{eq:gamma_ij}
		\gamma_{i,j} := \begin{cases}
			\frac{T\mathbf{d}^\top (\mathbf{x})\left(	\mathbf{\bar{y}_i}^\star - 	\mathbf{\bar{y}_j}^\star\right)}{\mathbf{d}_j(\mathbf{x}_j)-\mathbf{d}_i(\mathbf{x}_i)}, \; & k \geq \max(0,\mathbf{p}_i,k(i,n))\\
			\infty, & \text{otherwise}
		\end{cases},
	\end{equation}
	\else
	\begin{equation*}
		\gamma_{i,j} := \begin{cases}
			\frac{T\mathbf{d}^\top(\mathbf{x})\left(	\mathbf{\bar{y}_i}^\star - 	\mathbf{\bar{y}_j}^\star\right)}{\mathbf{d}_j(\mathbf{x}_j)-\mathbf{d}_i(\mathbf{x}_i)}, \; & k \geq \min(0,k(i,n))\\
			+\infty, & \text{otherwise}
		\end{cases},
	\end{equation*}
	\fi
\begin{equation*}
		\underline{\gamma}_j := \begin{cases}
			\max_{ i\in \{j+1,\ldots,n\}} \gamma_{j,i} & j <n \\
			-\infty & j = n
		\end{cases} ,
	\end{equation*}
\begin{equation*}
		\bar{\gamma}_j := \begin{cases}
			\min_{i\in \{1, \ldots,j-1\}}  \gamma_{i,j} & j >1 \\
			+\infty & j = 1
		\end{cases},
	\end{equation*}
	\begin{equation*}
		\gamma_{j}(k,k_{\mathrm{ref}},\mathbf{p},\mathbf{d(x)}) := \begin{cases}
			s_{\mathrm{min}}/\bar{s}, \;  & j = n\\
			[\bar{\gamma}_{j+1}]_{s_{\mathrm{min}}/\bar{s}}^{s_{\mathrm{max}}/\bar{s}} ,  \;  &  \bar{\gamma}_{j+1} \geq \underline{\gamma}_{j+1}\\
			\gamma_{j+1},  \; &  \bar{\gamma}_{j+1} < \underline{\gamma}_{j+1}\\
			s_{\mathrm{max}}/\bar{s}, \;  & j = 0,
		\end{cases}
	\end{equation*}
	where the dependence on $k$, $k_{\mathrm{ref}}$, $\mathbf{p}$, and $\mathbf{d(x)}$ was dropped to alleviate the notation. Then,  {under the feasibility conditions of Lemma~\ref{lem:feasibility}}, an optimal response strategy that follows from Problem~\ref{prb:individual} is $\mathbf{y}^\star = \mathbf{e_{j^\star}}${, if and only if} $\bar{\gamma}_{j^\star} \geq \underline{\gamma}_{j^\star}$ and  ${\gamma_{j^{\star}} \leq s/\bar{s} \leq \gamma_{j^\star-1}}$.
\end{theorem}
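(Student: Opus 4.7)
The plan is to attack Problem~\ref{prb:individual} in the two-stage fashion dictated by the structure of~\eqref{eq:singleAgentAverage}: fix the integer action $\mathbf{y}=\mathbf{e_j}$ first and optimise the continuous tail $\bar{\mathbf{y}}$, then compare the resulting costs across $j\in\{1,\ldots,n\}$. Under the hypothesis $\mathbf{d}_1(\mathbf{x}_1)<\ldots<\mathbf{d}_n(\mathbf{x}_n)$ together with $\mathbf{p}_1>\ldots>\mathbf{p}_n$, every arc trades monetary cost for comfort monotonically, so by Theorem~\ref{thm:brs_general} no arc is a priori dominated and every pairwise comparison below is meaningful.

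First, with $\mathbf{y}=\mathbf{e_j}$ fixed, I would solve the inner linear program in $\bar{\mathbf{y}}\in\bar{\mathcal{Y}}$: minimise $\mathbf{d}(\mathbf{x})^\top\bar{\mathbf{y}}$ subject to the karma budget $T\mathbf{p}^\top\bar{\mathbf{y}}\leq k-\mathbf{p}_j-k_{\mathrm{ref}}$ and the simplex equality $\mathbf{1}^\top\bar{\mathbf{y}}=1$. When $k\geq k(j,1)$, $\bar{\mathbf{y}}=\mathbf{e_1}$ is feasible and trivially optimal because arc~$1$ has the lowest discomfort. Otherwise the budget binds and a standard vertex argument shows that an optimum of this LP is supported on at most two arcs $\{a,\hat{j}_a\}$ that straddle the budget, which is precisely the constraint $k\in[\min(k(j,a),k(j,\hat{j}_a)),\max(k(j,a),k(j,\hat{j}_a))]$ used in~\eqref{eq:j_hat}. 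Minimising the discomfort-per-karma slope $(\mathbf{d}_i(\mathbf{x}_i)-\mathbf{d}_a(\mathbf{x}_a))/(\mathbf{p}_a-\mathbf{p}_i)$ selects the most efficient exchange rate and yields $\hat{j}_a$; the mixing weights in~\eqref{eq:y_j_a_star} then follow from the two binding constraints, and ranging over $a$ in~\eqref{eq:a_hat} produces $\bar{\mathbf{y}}_j^\star$.

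With these closed forms in hand, for each pair $i<j$ the comparison ``$j$ preferred to $i$'' in~\eqref{eq:singleAgentAverage_cost} reduces to a linear condition in $s/\bar{s}$ whose threshold is exactly~\eqref{eq:gamma_ij}. Monotonicity of $\mathbf{d}$ together with the fact that committing to the cheaper arc today enlarges the future budget gives $\mathbf{d}(\mathbf{x})^\top\bar{\mathbf{y}}_i^\star\geq \mathbf{d}(\mathbf{x})^\top\bar{\mathbf{y}}_j^\star$ for $i<j$, so $\gamma_{i,j}\geq 0$ and the direction is right: low $s/\bar{s}$ favours the cheaper, more uncomfortable $j$. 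Taking the minimum over $i<j$ yields the largest $s/\bar{s}$ for which $j$ still beats every more expensive option, i.e.~$\bar{\gamma}_j$, and the maximum over $i>j$ the smallest $s/\bar{s}$ for which $j$ beats every cheaper one, i.e.~$\underline{\gamma}_j$. When $\bar{\gamma}_j\geq \underline{\gamma}_j$ the interval $[\underline{\gamma}_j,\bar{\gamma}_j]$ is non-empty and $j$ is optimal on it; otherwise $j$ is dominated and the recursive definition of $\gamma_j$ skips it, transferring the threshold to the next non-dominated index. The ``if'' direction of the theorem is then immediate, and the ``only if'' direction follows because the non-dominated arcs partition the $s/\bar{s}$-axis (modulo boundary ties) into the claimed intervals.

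The main obstacle, in my view, is justifying the recursion rigorously: one must show that the optimal $j^\star$ as a function of $s/\bar{s}$ is a monotone ``staircase'' over the non-dominated arcs, and that discarding dominated indices preserves this structure. I would prove this by induction on the number of remaining candidates, leveraging an exchange argument that establishes monotonicity of the slopes $\gamma_{i,j}$ along the non-dominated chain. The saturations at $s_{\mathrm{min}}/\bar{s}$ and $s_{\mathrm{max}}/\bar{s}$ are then mere clippings of thresholds to the support of $s$, the boundary conventions $\bar{\gamma}_1=+\infty$, $\underline{\gamma}_n=-\infty$, $\gamma_0=s_{\mathrm{max}}/\bar{s}$, $\gamma_n=s_{\mathrm{min}}/\bar{s}$ handle the extreme arcs, and global feasibility throughout is inherited from Lemma~\ref{lem:feasibility}.
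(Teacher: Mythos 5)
Your proposal follows essentially the same route as the paper's proof: fix the integer choice $\mathbf{y}=\mathbf{e_j}$, solve the inner LP via a vertex argument showing the continuous part is supported on at most two arcs (yielding $\hat{j}_a$, $\mathbf{\bar{y}}^\star(j,a)$, and $\hat{a}$), and then reduce the comparison between integer choices to the sensitivity thresholds $\gamma_{i,j}$, with the correct direction that low $s/\bar{s}$ favours cheaper, more uncomfortable arcs. The one step you defer --- rigorously establishing the monotone ``staircase'' structure behind the recursive definition of $\gamma_j$ --- is also the step the paper treats most tersely, so your plan is faithful to, and no less complete than, the published argument.
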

\begin{proof}
	The proof can be found in
	\ifextendedversion
	Appendix~\ref{app:proof_brs}.
	\else
	the extended version of this paper~\cite{extendedversion}.
	\fi
\end{proof}

\begin{figure}[ht]
	\centering
	\includegraphics[width = 1\linewidth]{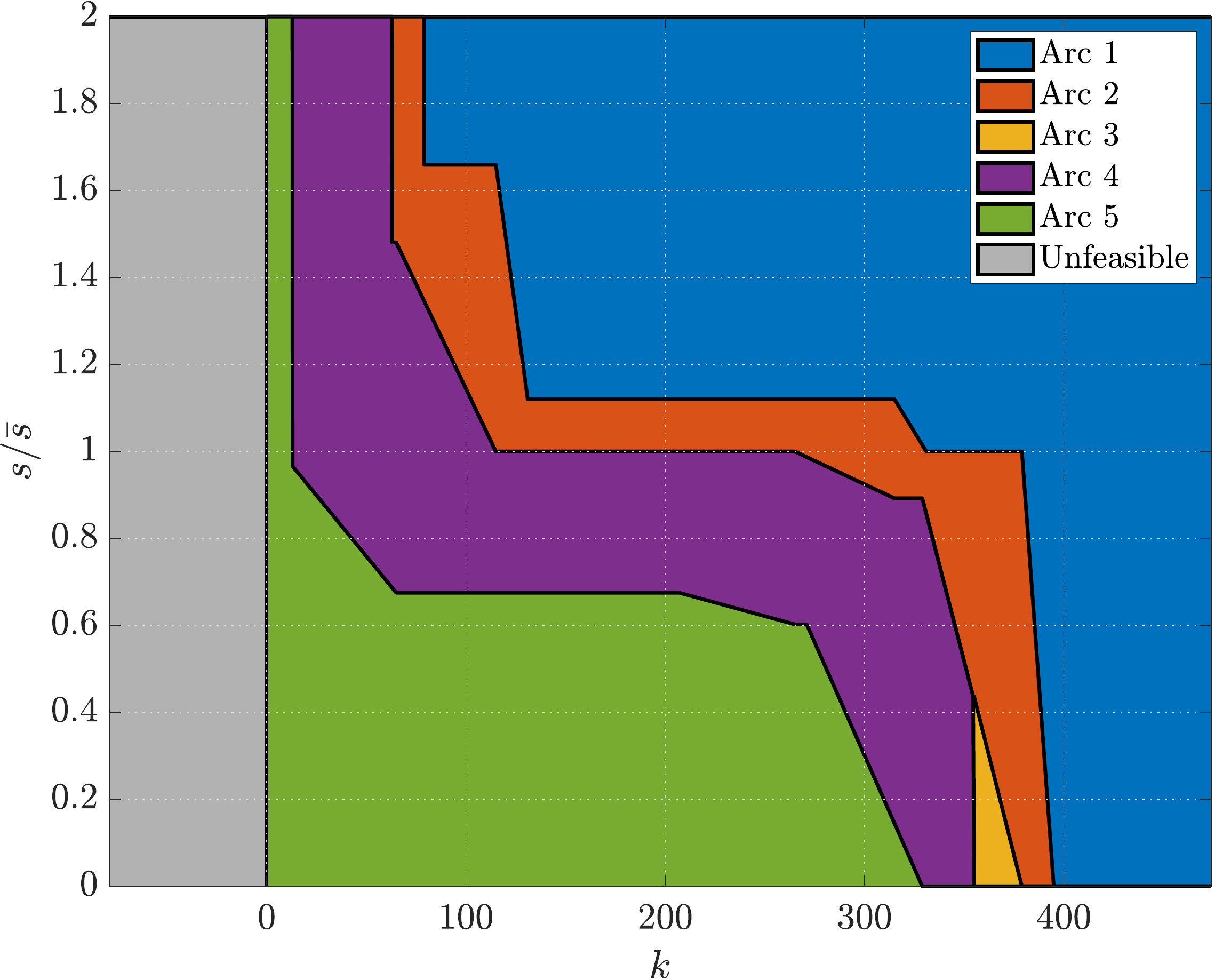}
	\caption{Best response strategy of Problem~\ref{prb:individual} for aggregate flows $\mathbf{x}^\star$, prices $\mathbf{p}^\star$, and $k_\mathrm{ref} = 0$.}
	\label{fig:decision_landscape}
\end{figure}

A few remarks are in order. First, an example of a decision landscape generated by the closed-form solution in Theorems~\ref{thm:brs_general} and \ref{thm:brs} is depicted in Fig.~\ref{fig:decision_landscape}. Second, note that there is an attractive invariant Karma set contained in {\small $[0,  \; k_\mathrm{ref}^i + (T+1)\max_{j}\mathbf{p}_j - \min_j \mathbf{p}_j]$}. Third, the best response strategy is invariant on a positive scaling of $k$, $k_\mathrm{ref}$, and $\mathbf{p}$, i.e., $\mathbf{e_{j^\star}}$ is a best response strategy for $k$, $k_\mathrm{ref}$, and $\mathbf{p}$ {if and only if} it is also for  $\alpha k$, $\alpha k_\mathrm{ref}$, and $\alpha \mathbf{p}$, with $\alpha \in \mathbb{R}_{>0}$. Finally, notice that in contrast to the $n=2$ arcs problem analyzed in \cite{SalazarPaccagnanEtAl2021}, the best response strategy explicitly depends also on the quantitative discomfort levels. 
\section{Mesoscopic Average Behavior}\label{sec:mesoscopic}

Now that we have analyzed the behavior of the individual user's response, we can step back and take a mesoscopic point of view, i.e., model the aggregate behavior resulting from the microscopic decisions. 

\subsection{Aggregate Decision}\label{sec:mesoscopic_aggregate}

At each time $t$, given Karma levels and reference probability distribution $\eta_t(k,k_\mathrm{ref})$, the probability of a traveling user with Karma level $k$ choosing arc $j \in \{1,\ldots,n\}$ is denoted by {$P(j | k,k_\mathrm{ref},\mathbf{p},\mathbf{x}^\mathrm{WE}(t))$} and is, under the conditions of Theorem~\ref{thm:brs}, given by
\begin{equation*}
	P(j|k,k_\mathrm{ref},\mathbf{p},\mathbf{x}^\mathrm{WE}(t)) = \!  \int\limits_{\gamma_{j}\left(k,k_\mathrm{ref},\mathbf{p},\mathbf{d}\left(\mathbf{x}^\mathrm{WE}(t)\right)\right)}^{\gamma_{j-1}\left(k,k_\mathrm{ref},\mathbf{p},\mathbf{d}\left(\mathbf{x}^\mathrm{WE}(t)\right)\right)} \rho(s)\,  \mathrm{d}s\,.
\end{equation*}
Thus, the discrete-time evolution of the Karma level density function can be written as
\begin{equation*}
	\begin{split}
		&\eta_{t+1}(k,k_\mathrm{ref}) =  P_{\mathrm{home}}\eta_t(k,k_\mathrm{ref}) \\
		&+ P_{\mathrm{go}} \sum_{j = 1}^n P\left(j|k+\mathbf{p}_j,k_\mathrm{ref},\mathbf{p},\mathbf{x}^\mathrm{WE}(t)\right) \eta_t(k+\mathbf{p}_j,k_\mathrm{ref}),
	\end{split}
\end{equation*}
for $k, k_\mathrm{ref} \in \mathbb{R}_{\geq 0}$, and $t\in \mathbb{N}$. Moreover, the definition of the WE equilibrium in Definition~\ref{def:WE} can be rewritten as
\begin{equation}\label{eq:WE_cf} \small 
		\mathbf{x}^\mathrm{WE}_j(t) \!=\!
		P_\mathrm{go}\!\!\int\limits_{0}^\infty\! \!\int\limits_{0}^\infty \! \!P(j|k,k_\mathrm{ref},\mathbf{p},\mathbf{x}^\mathrm{WE}(t)) \eta_t(k,k_{\mathrm{ref}})\, \mathrm{d}k\,\mathrm{d}k_{\mathrm{ref}},\!\!
\end{equation}%
$j = 1,\ldots,n$.

It is important to point out two key aspects. First, note that only in the strict ordering conditions of Theorem~\ref{thm:brs}, it is possible to write a closed-form deterministic expression for $P_t(j|k,k_\mathrm{ref},\mathbf{p},\mathbf{d}(\mathbf{x}^\mathrm{WE}(t)))$. If they are not satisfied, it is only known that $P_t(j|k,k_\mathrm{ref},\mathbf{p},\mathbf{d}(\mathbf{x}^\mathrm{WE}(t)))$ is such that the aggregate decisions reconstruct the aggregate flows at the WE, which is portrayed in \eqref{eq:WE_cf}. Second, remark the discrete nature of the evolution of the Karma level density function, which is a linear combination of the previous density function shifted by $n$ fixed values that correspond to the arcs' prices. Thus, although continuous Karma levels were considered up to this point, a user $i$ with a given initial Karma level $k_0$, can only evolve to Karma levels that are of the form $k =k_0 + \sum_{j=1}^n m_j\mathbf{p}_j$ with $m_j \in \mathbb{N}_0$. This observation suggests that modeling the Karma level evolution of a single user as a Markov chain is appropriate.

{\color{black} Although there is a bounded attractive Karma level set, as mentioned earlier, the number of distinct Karma levels cannot be bounded even if $||\mathbf{p}||$ is bounded. Henceforth, to prevent that we consider that {$\mathbf{p}$ is a vector of integers, i.e.,} $\mathbf{p}\in \mathbb{Z}^n$. Nevertheless, it is important to recall that due to the positive scaling invariance of the prices and Karma levels on the user's decision, pointed out in Section~\ref{sec:bestresponse}, the precision of the prices can be chosen to be as high as desired by increasing $||\mathbf{p}||$, amounting to enforce $\mathbf{p}\in \mathbb{Q}^n$ in a computationally tractable manner.}


\subsection{Stationary Markov Chain Model}

Consider a single user $i$ and assume that we are in the strict ordering conditions of Theorem~\ref{thm:brs}. Starting at a Karma level $k_0$, if $\mathbf{d}(\mathbf{x})$ is held constant, it is possible to propagate the possible Karma transitions and generate a finite Markov chain.  Let $\mathcal{K}(k_0,k^i_\mathrm{ref},\mathbf{p},\mathbf{x}) = \{k^i_1, \ldots, k^i_{|\mathcal{K}(k_0,k^i_\mathrm{ref},\mathbf{p},\mathbf{x})|}\}$ denote the state space of the chain and $\mathbf{A}(k_0,k^i_\mathrm{ref},\mathbf{p},\mathbf{x}) \in \mathbb{R}^{|\mathcal{K}(k_0,k^i_\mathrm{ref},\mathbf{p},\mathbf{x})| \times |\mathcal{K}(k_0,k^i_\mathrm{ref},\mathbf{p},\mathbf{x})| }$ the corresponding transition matrix in column-stochastic form, whereby the states are ordered by their corresponding Karma level. For the remainder of this subsection, the dependence of $\gamma_j$, $\mathcal{K}$, and $\mathbf{A}$ on $k_0$, $k^i_\mathrm{ref}$, $\mathbf{p}$, and $\mathbf{x}$ are dropped to alleviate the notation.


The entries of $\mathbf{A}$ can be expressed in closed-form by
\begin{equation*}
	\mathbf{A}_{uv}  = P_\mathrm{home}\mathbf{I} +  \begin{cases}
		0,  &\nexists j: k^i_v\!-\!k^i_u = \mathbf{p}_j\\
		P_\mathrm{go}\int\limits_{\gamma_{j}}^{\gamma_{j-1}} \rho(s)\,  \mathrm{d}s, & \exists j: k^i_v \!-\!k^i_u = \mathbf{p}_j\\
	\end{cases}.
\end{equation*}
Since $P_\mathrm{home}>0$,  the Markov chain is aperiodic. Note, however, that it is not necessarily irreducible, since there may exist more than one communication class. By the Perron-Frobenius Theorem \cite[Theorem~2.12]{Bullo2018}, it follows that the eigenvalue $\lambda=1$ is dominant but not necessarily simple. Denote the eigenvector associated with the eigenvalue $\lambda=1$ that corresponds to the stationary Karma distribution over  $\mathcal{K}$ of the Markov chain initialized in $k_0$ by $\boldsymbol{\pi}_{\infty}(k_0,k^i_\mathrm{ref},\mathbf{p},\mathbf{x})$. Notice that it corresponds to the limit of the power iteration of $\mathbf{A}$ initialized at the Karma level distribution with all probability concentrated in $k_0$. Finally, define the stationary arc selection matrix $\mathbf{P}(k_\mathrm{ref},\mathbf{p},\mathbf{x}) \in \mathbb{R}^{n \times |\mathcal{K}|}$ as the matrix whose entry $(u,v)$ is  given by $\mathbf{P}_{uv}(k_\mathrm{ref},\mathbf{p},\mathbf{x}) = P(u|k^i_v,k_\mathrm{ref},\mathbf{p},\mathbf{x})$. 



\subsection{WE as an Aggregate Markov Chain}

In the previous subsection, we modeled the stationary behavior of a single user under the conditions of Theorem~\ref{thm:brs} as a Markov chain. Now, we analyze the aggregate of the Markov chains that model the stationary behavior of each user. More specifically, given that this model is distinct only for distinct $k_\mathrm{ref}$, the aggregate over the Karma reference distribution is taken. In that regard, on the Assumption~\ref{ass:WE_convergence}, in steady-state, \eqref{eq:WE_cf} can be rewritten as 
\begin{equation}\label{eq:WE_cf_mc}
	\begin{split}
		&\mathbf{x}^\mathrm{WE}_\infty \!=  \\
		&P_\mathrm{go} \!\int\limits_{0}^\infty \! \mathbf{P}(k_\mathrm{ref},\mathbf{p},\mathbf{x}^\mathrm{WE}_\infty)  \boldsymbol{\pi}_{\infty}(k_0,k_\mathrm{ref},\mathbf{p},\mathbf{x}^\mathrm{WE}_\infty)\theta_{\mathbf{p}}(k_{\mathrm{ref}})\, \mathrm{d}k_{\mathrm{ref}}.\!\!\!\!\!\!
	\end{split}
\end{equation}

\section{Pricing Design Problem}\label{sec:pricing_design}

The pricing design problem, formulated in Problem~\ref{prb:prices}, is now tackled on the following assumption:

\begin{assumption}\label{ass:ordering}
	Assume that, at the system optimum, there is an arc ordering such that  $\mathbf{d}_1(\mathbf{x}_1^\star) < \ldots < \mathbf{d}_n(\mathbf{x}_n^\star)$ is satisfied.
\end{assumption}

Under Assumptions~\ref{ass:WE}, \ref{ass:WE_convergence}, and \ref{ass:ordering}, the problem amounts to finding $\mathbf{p} = \mathbf{p}^\star$ such that \eqref{eq:WE_cf_mc} is satisfied for $\mathbf{x}^\mathrm{WE}_\infty = \mathbf{x}^\star$. Notice that without Assumption~\ref{ass:ordering}, neither $\mathbf{P}(k_\mathrm{ref},\mathbf{p},\mathbf{x}^\mathrm{WE}_\infty)$ nor  $\boldsymbol{\pi}_{\infty}(k_0,k_\mathrm{ref},\mathbf{p},\mathbf{x}^\mathrm{WE}_\infty)$ would be deterministic, which follows from the analysis in Section~\ref{sec:mesoscopic_aggregate}.
{It is important to} point out that first, the integer nature of $\mathbf{p}$, i.e. $\mathbf{p} \in \mathbb{Z}^n$, makes it challenging to solve \eqref{eq:WE_cf_mc}.
Second, the Karma reference distribution $\theta_\mathbf{p}(k_\mathrm{ref})$ depends on the pricing policy $\mathbf{p}$.
Third, not only do the entries of $\mathbf{P}(k_\mathrm{ref},\mathbf{p},\mathbf{x}^\mathrm{WE}_\infty)$ and  $\boldsymbol{\pi}_{\infty}(k_0,k_\mathrm{ref},\mathbf{p},\mathbf{x}^\mathrm{WE}_\infty)$ in  \eqref{eq:WE_cf_mc} depend nonlinearly on $\mathbf{p}$, but also the dimensions of the matrix and vector themselves change with $\mathbf{p}$.

To find the optimal prices, we enforce $\mathbf{1}^\top\mathbf{x}^{\star} = P_\mathrm{go}$, which via \eqref{eq:WE_cf_mc} only enforces {one constraint} on $\mathbf{p}$. The additional constraints stem from the fact that, at steady-state, the expected Karma level remains constant, hence $\mathbf{p}^{\star \top}\mathbf{x}^{\star} = 0$, and from the fact that the best response strategy is invariant on a positive scaling of prices and Karma distributions. Whilst these constraints were sufficient to design the optimal static prices for the 2-arc setting~\cite{SalazarPaccagnanEtAl2021}, for the general $n$-arc case under consideration we still need to find the optimal $\mathbf{p}^\star$ satisfying~\eqref{eq:WE_cf_mc} with $\mathbf{x}^\mathrm{WE}_\infty = \mathbf{x}^\star$, which, as mentioned above, is highly nonlinear and non-smooth. To the best of the authors' knowledge, these features make the derivation of a closed-form solution not feasible.

\subsection{Numerical Design Method}

{We leverage the structure of the problem to overcome the aforementioned difficulties and reframe the pricing design problem thoughtfully so that it can be solved efficiently. In this regard, we introduce three considerations to enable the numerical solution of~\eqref{eq:WE_cf_mc} for $\mathbf{p}$ with $\mathbf{x}^\mathrm{WE}_\infty = \mathbf{x}^\star$.} First, $\theta_\mathbf{p}(k_\mathrm{ref})$ has to be bounded and discrete to be numerically tractable. Note that this is a reasonable assumption since there is an attractive invariant Karma set and the Karma levels are discrete because $\mathbf{p}\in \mathbb{Z}^n$. Second, since $\mathbf{p} \in \mathbb{Z}^n$, the equality in \eqref{eq:WE_cf_mc} will not be achieved exactly. Instead, one may attempt to minimize {the deviation of the cost of the right-hand term w.r.t. the optimal aggregate flows.} Nevertheless, the larger $||\mathbf{p}||$ is allowed to be, the closer is the equality. Third, the constraint $(\mathbf{p}^\star)^\top\mathbf{x}^{\star} = 0$ may not be satisfied exactly if the entries of $\mathbf{x}^\star$ are irrational or if $||\mathbf{p}||$ is bounded. Thus, one can substitute it with a quantized  approximation $\mathbf{x}^\star_\mathrm{quant}$.


Therefore, the proposed pricing design optimization problem {becomes:}

\begin{problem}[Numerical Pricing Design]\label{prb:prices_num}
	Given a desired system optimum $\mathbf{x}^\star$, select $\mathbf{p}$ as the solution to
\begin{equation}\label{eq:num_prob}
	\begin{split}
		&	\!\!\!\!\!\!\!{\small \mathop{\mathrm{min}}\limits_{\mathbf{p}\in \mathbb{Z}^n} {C}\!\left(\!\!P_\mathrm{go}\!\!\!\!\!\sum\limits_{k_\mathrm{ref} = k_{\mathrm{ref}_{\mathrm{min}}}}^{k_{\mathrm{ref}_{\mathrm{max}}}}\!\!\!\!\!\!\! \mathbf{P}(k_\mathrm{ref},\mathbf{p},\mathbf{x}^\star) \boldsymbol{\pi}_{\infty}(k_0,k_\mathrm{ref},\mathbf{p},\mathbf{x}^\star)\theta_{\mathbf{p}}(k_{\mathrm{ref}})\! \!\right)}\\
		& \!\!\!\!\mathop{\mathrm{s.t.}}  \quad \mathbf{p}^\top\mathbf{x}^{\star}_\mathrm{quant} = 0\\
		& \!\!\!\! \phantom{\mathop{\mathrm{s.t.}}}\;\quad \mathbf{p}_j > \mathbf{p}_{j+1}, \, j = 1,\ldots, n-1\\
		& \!\!\!\!\phantom{\mathop{\mathrm{s.t.}}}\;\quad \mathbf{p}_1 > 0\\
		& \!\!\!\!\phantom{\mathop{\mathrm{s.t.}}}\;\quad \mathbf{p}_n < 0,
	\end{split}\!\!\!\!\!\!\!\!\!
\end{equation}
where $k_{\mathrm{ref}_{\mathrm{min}}}$ and $k_{\mathrm{ref}_{\mathrm{max}}}$ are the minimum and maximum values of the support of $\theta_\mathbf{p}$, respectively.
\end{problem}

Such a problem can be efficiently solved with gradient-free methods, as shown in Section~\ref{sec:num_res} below. Furthermore, a useful particularity of Problem~\ref{prb:prices_num} is that the minimum of the objective function is known and given by $C(\mathbf{x}^\star)$. Thus, it is easy to evaluate the suboptimality bound and stop the numerical method whenever it reaches a given threshold.
	
%


\section{Numerical Results}\label{sec:num_res}

In this section, numerical results are presented for an illustrative case study with $n = 5$. We consider $M = 1000$ users of which, on average, $P_\mathrm{home} = 5\%$ do not travel every day. Their daily sensitivity is sampled from a uniform distribution on the interval $[0,2]$ and their prediction horizon is $T=4$. We model the discomfort as a travel-time Bureau of Public Roads (BPR) function \cite{BPR1964}
\begin{equation*}
	\mathbf{d}_j(\mathbf{x}_j) = \mathbf{d^0}_{\!\!j}\left(1+ \alpha (\mathbf{x}_j/\boldsymbol{\kappa}_j)^\beta\right),
\end{equation*}
with $\alpha = 0.15$, $\beta = 4$, and $\mathbf{d^0}$ and $\boldsymbol{\kappa}$ were generated randomly which, rounded to four decimal places, are given by $\mathbf{d^0} = [0.5001 \; 0.5734 \; 0.7085 \; 0.6512 \; 0.8602]^\top$ and $\boldsymbol{\kappa} = [0.0923 \; 0.1863 \; 0.3968 \; 0.3456 \;0.5388]^\top$, ordered according to the arc ordering in Assumption~\ref{ass:ordering}. We consider distribution of the reference values $\theta_\mathbf{p}$ to be a discrete uniform distribution with support $\{k_\mathrm{ref}\in \mathbb{N} \,|\, k_\mathrm{ref} = 0 \lor k_\mathrm{ref} = \mathbf{p}_j \}$, which corresponds to users having the possibility of saving Karma to afford traveling through an arc with a positive price at the end of the horizon.  The system's cost is considered to be a weighted sum of the travel-time in each link, i.e., $\mathbf{c}_j(\mathbf{x}):=  \mathbf{c^0}_{\!\!j}\mathbf{d}_j(\mathbf{x}_j)$, whose weights were randomly generated and, rounded to four decimal places, are given by $\mathbf{c^0}_{\!\!\!j} = [0.7096 \; 0.8426 \; 0.9391 \; 0.6022 \; 0.5137]$. This can correspond to the weighted minimization of, for example, sound pollution.

Rounded to four decimal places, {employing \cite{Loefberg2004},} $\mathbf{x}^\star = [0.0877 \; 0.1309 \; 0.0000 \; 0.3053 \; 0.4261]^\top$  and ${\mathbf{d}(\mathbf{x}^\star) = [0.5611 \; 0.5943 \; 0.7085 \; 0.7107 \; 0.9106]^\top}$, which is in accordance with Assumption~\ref{ass:ordering}. The optimization problem \eqref{eq:num_prob} is solved using a standard genetic algorithm method subject to $||\mathbf{p}||_\infty \leq 100$ {in less than 500 wall-clock seconds in a standard laptop}, whose solution is ${\mathbf{p}^\star = [79\;63\;39\;13\;-45 ]^\top}$. We considered $k_0 = \mathbf{p}_1$ and $\mathbf{x}^{\star}_\mathrm{quant}$ {resulting from rounding $\mathbf{x}^\star$ to three decimal places}.

\begin{figure}[ht!]
	\begin{subfigure}{\linewidth}
		\centering
		\includegraphics[width = \linewidth]{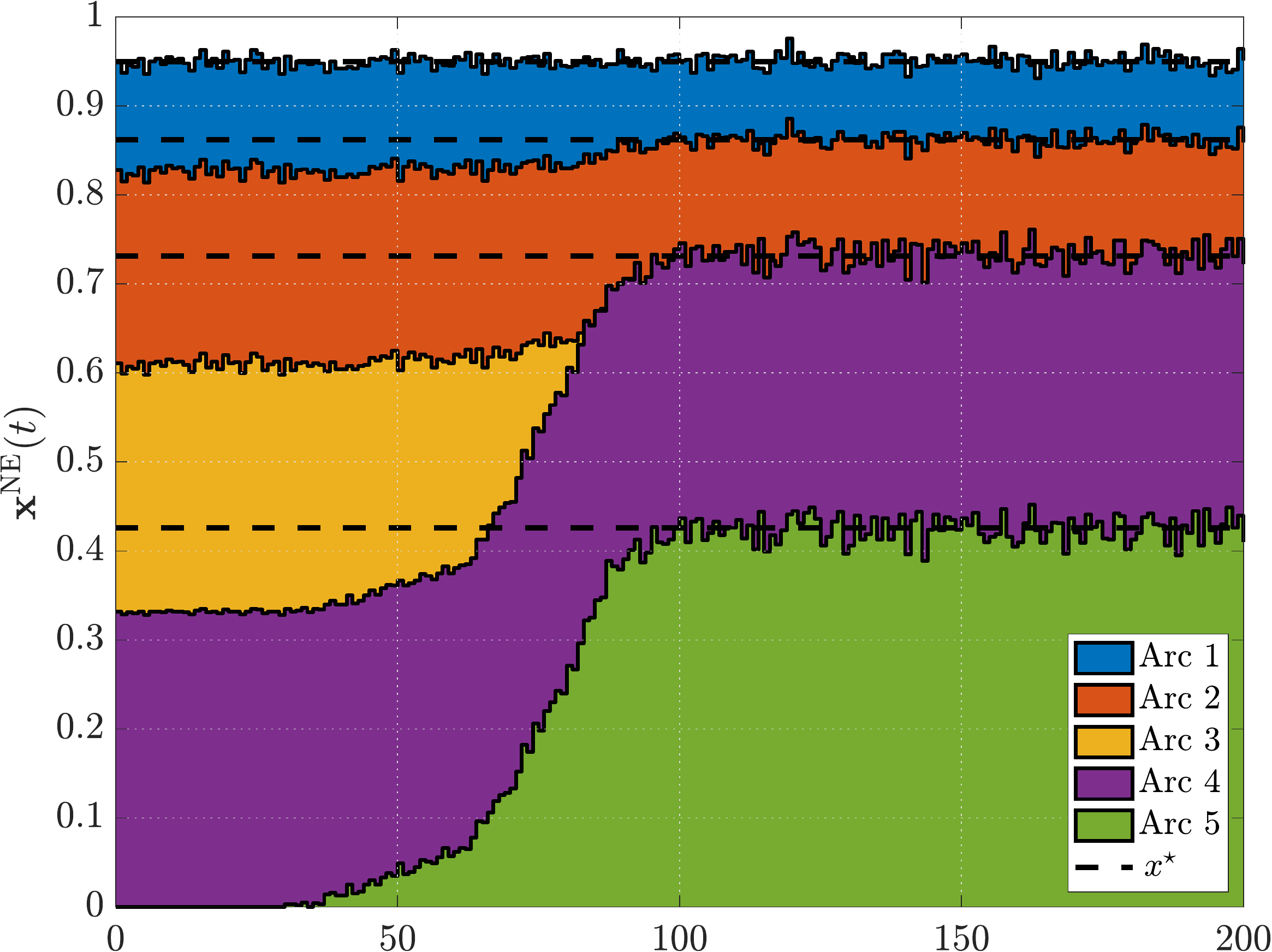}
		\caption{Evolution of aggregate flows.}
		\label{fig:decision}
	\end{subfigure}\\%
	\begin{subfigure}{\linewidth}
		\raggedleft
		\includegraphics[width = 0.99\linewidth]{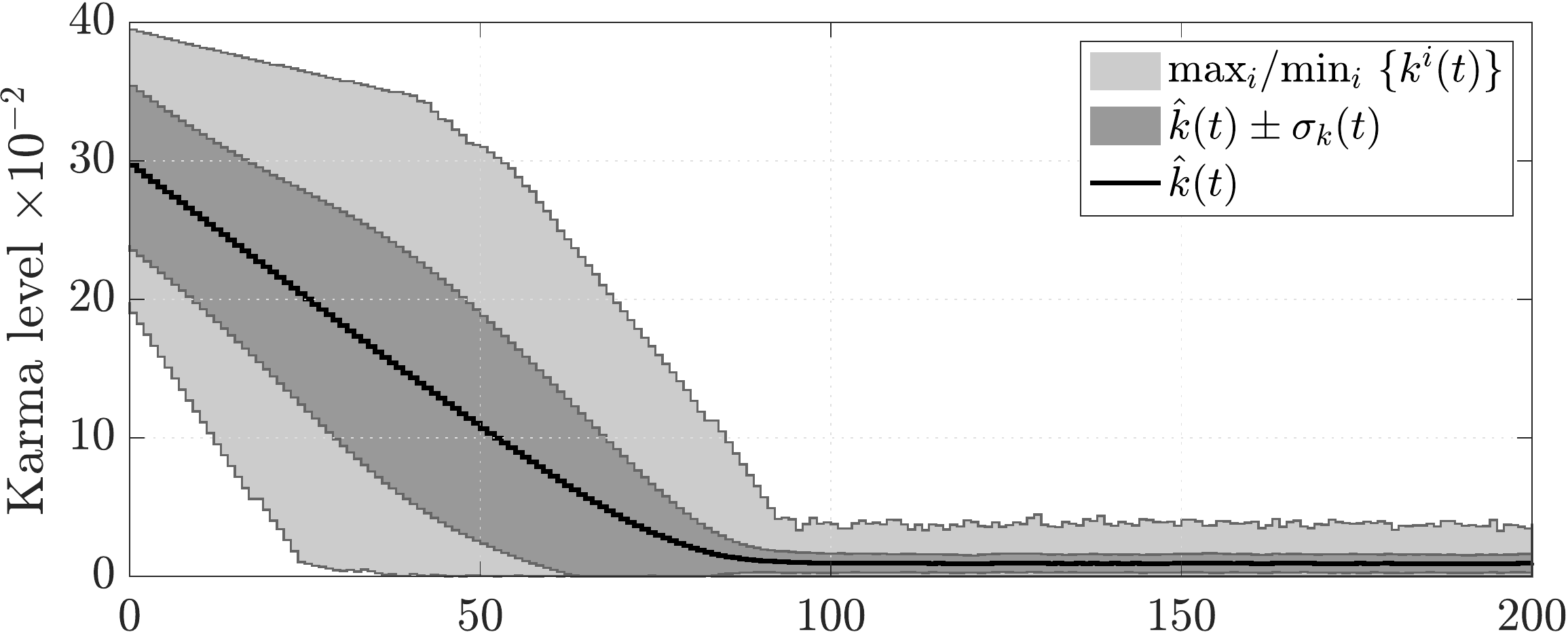}
		\caption{Evolution of Karma level.}
		\label{fig:karma}
	\end{subfigure}\\%
	\begin{subfigure}{\linewidth}
		\raggedleft
		\includegraphics[width = 0.99\linewidth]{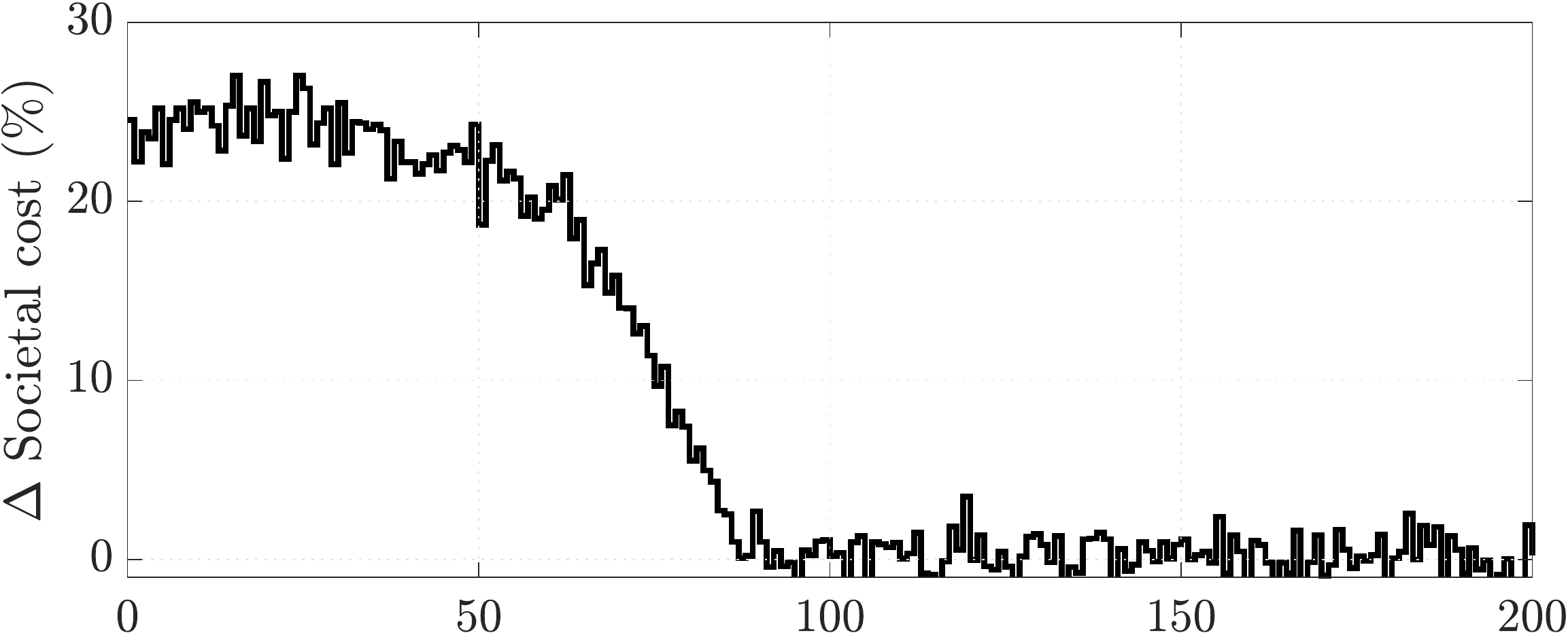}
		\caption{Evolution of the relative cost difference.}
		\label{fig:cost}
	\end{subfigure}\\%
	\begin{subfigure}{\linewidth}
		\raggedleft
		\includegraphics[width = \linewidth]{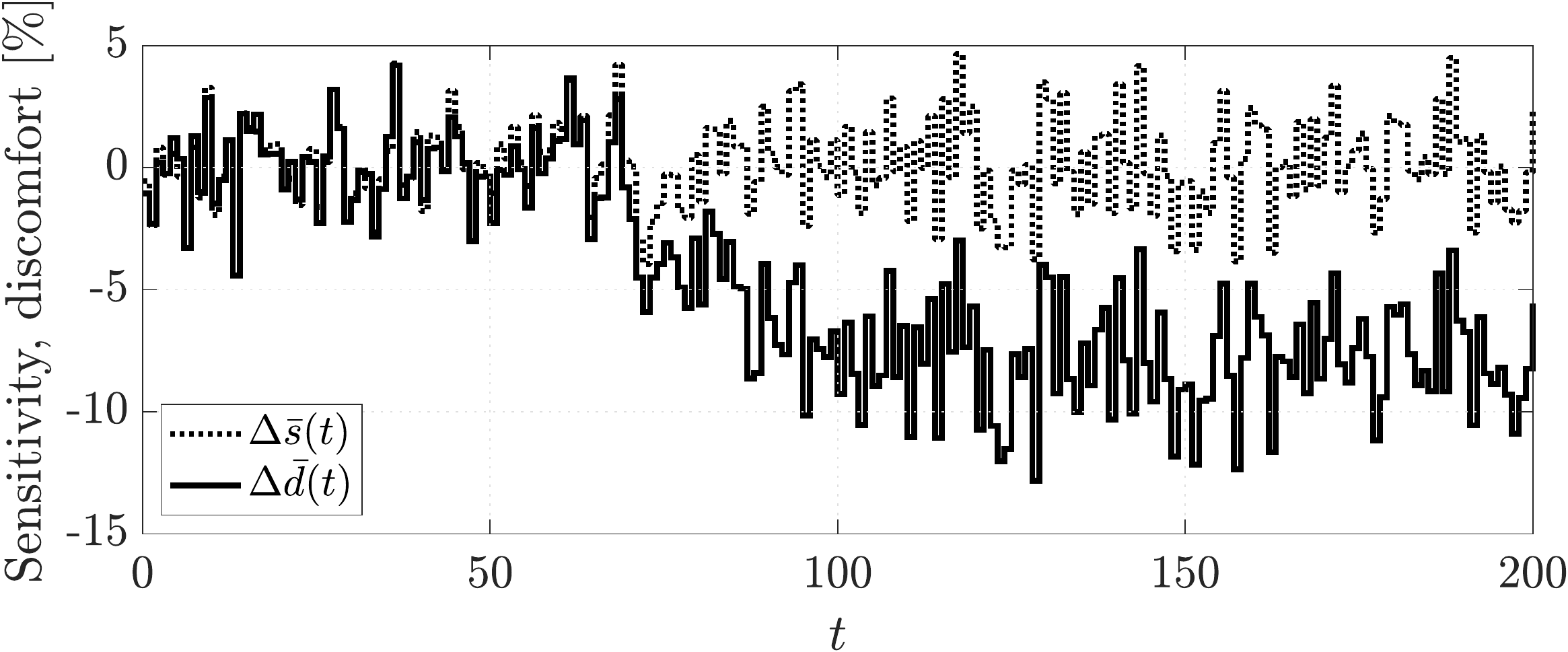}
		\caption{Evolution of the relative sensitivity and discomfort deviation.}
		\label{fig:sensitivity}
	\end{subfigure}%
	\caption{Numerical simulation results.}
	\label{fig:num_sim}
\end{figure}

The daily simulations are carried out by computing the Nash equilibrium that follows from the decisions of each user to Problem~\ref{sec:bestresponse}, which approximate the WE as $M\to\infty$. The Karma values were initialized randomly according to  a discrete uniform distribution with support $\{25\mathbf{p}_1,25\mathbf{p}_1,+1,\ldots,50\mathbf{p}_1\}$.  Figs.~\ref{fig:decision}--\ref{fig:cost} depict the evolution of the aggregate flows, Karma level, and relative cost difference in relation to the system optimum, respectively, throughout the simulation. We denote the average and the standard deviation of the users' Karma level at time $t$ by $\hat{k}(t)$ and $\sigma_k(t)$, respectively. First, since the initial Karma levels are very high, the users act as if the pricing scheme were not implemented. This can be seen in the initial plateau in Fig.~\ref{fig:cost} which is associated with the constant aggregate flows visible in Fig.~\ref{fig:decision}. Nevertheless, as the users' Karma is depleted, as shown in Fig.~\ref{fig:karma}, the users can no longer afford every link and the pricing mechanism drives the aggregate flows to the system-optimal flows. Second, it is important to point out that, despite all the assumptions made to tackle the intractability of the pricing design problem and enable a numerical solution, the prices that were designed get very close to the system optimum, as visible in Fig.~\ref{fig:cost}, with an average relative difference in relation to the theoretical optimum of  $0.15\,\%$ only over the last $50$ instants of the simulation. In fact, the steady-state aggregate flows of the numerical simulation closely match $\mathbf{x}^\star$, as visible in Fig.~\ref{fig:decision}. Third, we analyze  i)~the relative difference of the average perceived discomfort w.r.t. a scenario in which the users are centrally allocated to the optimal flows randomly, i.e. without taking into account their sensitivity, which is given by
\begin{equation*}
	\Delta \bar{d}(t) := \frac{\sum_{i = 1}^M s^i(t) \mathbf{d}(\mathbf{x})^\top\mathbf{y}^i(t) + \bar{s}\, \mathbf{d}(\mathbf{x})^\top\mathbf{y}^i(t) }{\sum_{i = 1}^M \bar{s}\, \mathbf{d}(\mathbf{x})^\top\mathbf{y}^i(t)};
\end{equation*}
and ii)~the relative deviation of the average sensitivity in relation to the expected sensitivity, i.e., ${\Delta \bar{s}(t) := (1/M)\sum_{i = 1}^M (s^i(t)-\bar{s})/\bar{s}}$. Fig.~\ref{fig:sensitivity} depicts the evolution of these two quantities. It is noticeable that, at steady-state, the perceived discomfort is roughly $8\,\%$ lower in comparison to an optimal but urgency-unaware policy.

%
%
%

Due to space limitations, some details regarding the numerical pricing design and the simulation were omitted.  Nevertheless, a MATLAB implementation as well as additional simulation results,  is openly available in an open source repository at  {\small\texttt{\url{https://fish-tue.github.io/single-origin-destination-routing}}}.
\section{Conclusion}\label{sec:concl}
In this paper, we explored a fair incentive mechanism based on artificial currencies to tackle routing problems whilst accounting for the daily urgency of the users. We modeled the system as a repeated game and we obtained a closed-form solution for the user's daily strategy, which enables a numerical solution of the {arc-pricing} design problem. We showed that by employing a simple static payment-transaction scheme, our approach steers the aggreggate flows towards the societally optimal flows, achieving the minimum societal cost. On top of that, the simulation results indicated that the proposed scheme allows for a significant reduction of the users' perceived discomfort in relation to an optimal but urgency-unaware policy.
	
%
%
%

{In the future, we {aim} to apply this scheme to an intermodal mobility network and to electric vehicle charging problems.}



\section*{Acknowledgment}
We thank Dr.\ I.\ New and F.\ Paparella for proofreading the~paper.

\appendices
\section{Proof of Lemma~\ref{lem:feasibility}} \label{app:proof_feasibility}
If {\small $k \geq \max(0,k_\mathrm{ref}+(\min_j\mathbf{p}_j)(T+1))$}, then  $\mathbf{y} = \bar{\mathbf{y}} = \mathbf{e_{\mathop{\mathrm{argmin}}_j(\mathbf{p}_j)}}$ is feasible. Conversely, if the optimization problem is feasible, then there is at least one  pair $(\mathbf{y},\bar{\mathbf{y}})$ that satisfies the constraints. Given that {$ k-(T+1)\min_j\mathbf{p}_j \geq k-\mathbf{p}^\top \mathbf{y} - T\mathbf{p}^\top\bar{\mathbf{y}} \geq k_\mathrm{ref}$}
and $\min_j\mathbf{p}_j \leq \mathbf{p}^\top \mathbf{y} \leq k$,
the reciprocal is true and the result follows immediately.

\section{Proof of Theorem~\ref{thm:brs_general}} \label{app:proof_brs_general}

{Before proceeding with the proof it is worth pointing out the significance of sets $\mathcal{J}_u$ and $\mathcal{J}_e$. First, note that if, for a given arc $j$, there exists an arc $i$ with strictly lower discomfort and cost, then arc $j$ is never an integer solution, because choosing arc $i$ is still feasible and would always achieve lower cost. We denote the set of such unreasonable arcs as  $\mathcal{J}_u$. Second, note that the objective function of Problem~\ref{prb:individual} does not depend on the prices of the chosen arcs. In fact, if two arcs $i$ and $j$ have the same discomfort and both are feasible, then they are equally fit integer solutions. In that regard, one can attempt to consider only a set of arcs with unique discomforts and then extend the solution to the other equally fit arcs that have the same discomfort. In that regard, we define the set $\mathcal{J}_e$ of arcs that have repeated discomforts and have the highest prices. Thus, the set ${\{1,\ldots,n\} \setminus ( \mathcal{J}_u \cup  \mathcal{J}_e)}$ contains the arcs that are not unreasonable and have unique discomforts with the lowest price.}

Statement i) is proved by contradiction. Assume that $(\mathbf{e_{j^\star}},\bar{\mathbf{y}}^\star)$ is a solution with $j^\star \in \mathcal{J}_u$. It follows from the definition of $ \mathcal{J}_u$ that there is an arc $i \in \{1,\ldots,n\}$ such that $\mathbf{p}_i \leq \mathbf{p}_j$ and  $\mathbf{d}_i(\mathbf{x}_i) \!< \!\mathbf{d}_j(\mathbf{x}_j)$. Thus, since $(\mathbf{e_{j^\star}},\bar{\mathbf{y}}^\star)$ is feasible, $(\mathbf{e_i}, \bar{\mathbf{y}}^\star)$ must also be feasible. Additionally,  $s\, \mathbf{d}_i(\mathbf{x}_i)+ T\,\bar{s}\, \mathbf{d}(\mathbf{x})^\top \bar{\mathbf{y}}^\star < s\, \mathbf{d}_j(\mathbf{x}_j)+ T\,\bar{s}\, \mathbf{d}(\mathbf{x})^\top \bar{\mathbf{y}}^\star$, i.e., $(\mathbf{e_i}, \bar{\mathbf{y}}^\star)$ achieves a lower cost than $(\mathbf{e_{j^\star}}, \bar{\mathbf{y}}^\star)$, which is a contradiction.


To prove statement ii), consider $i,j \in \{1,\ldots,n\} : j<i$. Under the discomfort ordering assumption, $\mathbf{d}_i(\mathbf{x}_i) \geq  \mathbf{d}_j(\mathbf{x}_j)$. If $i,j \notin \mathcal{J}_u$ and $\mathbf{d}_i(\mathbf{x}_i) >  \mathbf{d}_j(\mathbf{x}_j)$, then $\mathbf{p}_i < \mathbf{p}_j$. If $i,j \notin \mathcal{J}_e$, then $\mathbf{d}_i(\mathbf{x}_i) >  \mathbf{d}_j(\mathbf{x}_j)$. Thus, if $i,j \notin \mathcal{J}_u \cup \mathcal{J}_e$, then $\mathbf{p}_i < \mathbf{p}_j$.

We now turn to the proof of statement iii). Assume that $(\mathbf{e_{q}}, \mathbf{\bar{y}^{q}})$ for some $q \in \{1,\ldots,Q\}$ is a solution to Problem~\ref{prb:individual} for aggregate flows $\{\mathbf{x}_j\}_{j\in \{1,\ldots,n\} \setminus ( \mathcal{J}_u \cup  \mathcal{J}_e) }$, and prices $\{\mathbf{p}_j\}_{j\in \{1,\ldots,n\} \setminus ( \mathcal{J}_u \cup  \mathcal{J}_e)}$, denoted for the remainder of this proof as the reduced problem. Then, the additional arcs available in Problem~\ref{prb:individual} for aggregate flows $\mathbf{x}$ and prices $\mathbf{p}$, denoted for the remainder of this proof as the original problem,  are those in the set $\mathcal{J}_u\cup\mathcal{J}_e$. Arcs in $\mathcal{J}_u$ were already proved to be unfeasible, according to statement i). Arcs in $\mathcal{J}_e$ have discomforts that are equal to the discomfort of one and only one arc in $\{1,\ldots,n\} \setminus ( \mathcal{J}_u \cup  \mathcal{J}_e)$, which has the lowest price due to the way $\mathcal{J}_e$ is defined. Thus, $(\mathbf{e_{q}}, \mathbf{\bar{y}^{q}})$ is also a solution to the original problem as well as any other arcs in $\mathcal{J}_e$ that achieve the same discomfort, i.e.,  $j\in \mathcal{J}_e: \mathbf{d}_j(\mathbf{x}_j) =\mathbf{d}_q(\mathbf{x}_q)$, and whose prices are still feasible, i.e., $j\in \mathcal{J}_e: k \geq \mathbf{p}_j \land k - \mathbf{p}_j -T\sum \nolimits_{i\in \{1,\ldots,n\} \setminus ( \mathcal{J}_u \cup  \mathcal{J}_e)} \mathbf{p}_i  \mathbf{\bar{y}^{q}}_i \geq k_{\mathrm{ref}}$. It remains to prove that all the solutions to the original problem are obtained, employing this procedure, from the solutions  $(\mathbf{e_{q}}, \mathbf{\bar{y}^{q}})$ with $q\in \{1,\ldots,Q\}$ to the reduced problem. Assume that $(\mathbf{e_{j^\star}},\mathbf{\bar{y}}^\star)$ is a solution to the original problem. It was already proved that $j^\star \notin \mathcal{J}_u$. Then, either $j^\star \in  \{1,\ldots,n\} \setminus ( \mathcal{J}_u \cup  \mathcal{J}_e) $ or $j^\star \in \mathcal{J}_e$. First, assume the former. It is immediate that $(\mathbf{e_{j^\star}},\mathbf{\bar{y}^q})$  is a solution of the reduced problem, where $\mathbf{\bar{y}^{q}}$ is obtained from $\mathbf{\bar{y}}^\star$ by adding together the entries corresponding to arcs with equal discomfort. Second, assume the latter, i.e., $j^\star \in \mathcal{J}_e$. Then, from the definition of $\mathcal{J}_e$, it follows that there is one and only one  $q^\star \in \{1, \ldots n\} \setminus ( \mathcal{J}_u \cup  \mathcal{J}_e) :\mathbf{d}_{q^\star}(\mathbf{x}_{q^\star}) = \mathbf{d}_j(\mathbf{x}_j) \land \mathbf{p}_{q^\star} \leq\mathbf{p}_j$. It follows that $(\mathbf{e_{q}}, \mathbf{\bar{y}^{q}})$ is a solution to the reduced problem, since the condition $\mathbf{p}_{q^\star} \leq\mathbf{p}_j$ ensures that it is feasible, where $\mathbf{\bar{y}^{q}}$ is obtained as previously described.

\section{Proof of Theorem~\ref{thm:brs}} \label{app:proof_brs}

Problem~\ref{prb:design} is a mixed-integer linear programming (MILP) optimization problem. To obtain a closed-form solution, the following procedure is employed: We start by assuming the integer part of the solution is known, i.e, $\mathbf{y}^\star = \mathbf{e_j}$ for some $j\in \{1,\ldots,n\}$, and then we compute the optimal non-integer variables, denoted by $\mathbf{\bar{y}_j^\star}$, assuming that integer decision, which reduces to a linear programming (LP) optimization problem. The solution is, afterwards, given by the pair $(\mathbf{e_j},\mathbf{\bar{y}_j^\star})$ that achieves the lowest cost.

First, assume that $\mathbf{y}^\star = \mathbf{e_j}$ for some $j\in \{1,\ldots,n\}$. Note that all $j$ that do not satisfy $\mathbf{p}_j \leq k \land k \geq k(j,n)$ can be immediately discarded, because at least one of the constraints \eqref{eq:singleAgentAverage_c1} and  \eqref{eq:singleAgentAverage_c2} is not satisfied. The problem is, thus, reduced to an LP given by
\begin{subequations}\label{eq:singleAgentLP}
\begin{align}
	\mathbf{\bar{y}_j^\star} \in \mathop{\mathrm{argmin}}_{\bar{\mathbf{y}}  \in [0,1]^n} \;& \mathbf{d}(\mathbf{x})^\top \bar{\mathbf{y}}\\ \label{eq:singleAgentLP_c1}
	\mathrm{s.t.}\;\; &k-\mathbf{p}_j- T\mathbf{p}^\top\bar{\mathbf{y}} \geq k_\mathrm{ref}\\ \label{eq:singleAgentLP_c2}
	&\mathbf{1}^\top \bar{\mathbf{y}}  =  1  \,.
\end{align}
\end{subequations}
Note that cost function of the LP is simply the inner product of $\mathbf{d}(\mathbf{x})$ and $\bar{\mathbf{y}}$. Introduce constraint $\mathbf{1}^\top\bar{\mathbf{y}} = 1$ to eliminate $\bar{\mathbf{y}}_a$, for some $a\in \{1,\ldots,n\}$, in the cost function and in the inequality \eqref{eq:singleAgentLP_c1}, which yields 
\begin{equation}\label{eq:proof_brs_karma_cost}
\mathbf{d}_a(\mathbf{x}_a) + \sum_{\substack{i = 1\\i\neq a}}^n \bar{\mathbf{y}}_i(\mathbf{d}_i(\mathbf{x}_i)-\mathbf{d}_a(\mathbf{x}_a))
\end{equation}
and
\begin{equation}\label{eq:proof_brs_karma_constr}
\sum_{\substack{i = 1\\i\neq a}}^n \bar{\mathbf{y}}_i(\mathbf{p}_i-\mathbf{p}_a) \leq  \frac{k-k(j,a)}{T},
\end{equation}
respectively. Now we disregard $\bar{\mathbf{y}}_a$ and solve the LP in the $n-1$ dimensional space of the remaining components of $\bar{\mathbf{y}}$. 

We start by noting that there is a solution $\mathbf{\bar{y}^\star_j}$ that has, at most, two non-zero entries. Equivalently, there exists at least one  $a \in \{1,\ldots,n\}$ such that the solution of the $n-1$ dimensional LP is along an axis in a Cartesian frame. This statement can be proved by contradiction. Assume that, for any $a \in \{1,\ldots,n\}$ no solution is along the axes. Consider only one, $a_1 \in \{1,\ldots,n\}$. Because it is an LP optimization problem, a solution must lie in one of the vertices of the  polytope whose faces are defined by $ \bar{\mathbf{y}}_i \geq 0, \forall i \in \{1,\ldots,n\}\setminus \{a_1\}$,  \eqref{eq:proof_brs_karma_constr}, and
\begin{equation}\label{eq:proof_brs_sum_constraint}
\sum_{\substack{i = 1\\i\neq a_1}}^n \bar{\mathbf{y}}_i \leq 1.
\end{equation}
A vertex is, thus, at the intersection of $n-1$ hyperplanes defined by the boundary of these constraints. The only combination of $n-1$ hyperplanes that yields a vertex that is not along the axis is of $n-3$ hyperplanes of the form $ \bar{\mathbf{y}}_i = 0,  i \in \{1,\ldots,n\}\setminus \{a_1\}$ and those defined by the boundaries of \eqref{eq:proof_brs_karma_constr} and \eqref{eq:proof_brs_sum_constraint}. Therefore, \eqref{eq:proof_brs_sum_constraint} is an active constraint of the solution, which is of the form 
\begin{equation}\label{eq:proof_brs_one_sol_contradiction}
\bar{\mathbf{y}}_i = \begin{cases}
	K,  &i = u\\
	1-K,  & i = v\\
	0,  &i\notin \{a_1,u,v\}
\end{cases}, i \in \{1,\ldots,n\}\setminus\{a_1\}
\end{equation}
for some $K\in (0,1)$, and some $u,v \in  \{1,\ldots,n\}\setminus\{a_1\}$. So the solution of the initial $n$ dimensional LP is given by \eqref{eq:proof_brs_one_sol_contradiction} and $\bar{\mathbf{y}}_{a_1} = 0$. Now consider $a_2 = u$. In the new $n-1$ dimensional LP obtained by eliminating $\bar{\mathbf{y}}_{a_2}$, the solution is along the axis of the Cartesian frame, which is a contradiction.

Now, one can choose, for a given and fixed $j$, among all the vertices of the polytope along the axes for all  $a \in \{1,\ldots,n\}$ the one that yields the lowest cost, which is guaranteed to be a solution. For a given $a$ there can be three different types of vertices: at the origin and at the intersection of the boundary hyperplane of either \eqref{eq:proof_brs_karma_constr} or  \eqref{eq:proof_brs_sum_constraint} with the axes. The intersection of the boundary hyperplane \eqref{eq:proof_brs_sum_constraint} with an axis of coordinate $\bar{\mathbf{y}}_i$ would be the same solution as the origin in the $(n-1)$ dimensional LP with $a=i$, hence these types of vertices can be disregarded since they are captured for another $a$. Also, a vertex at the origin, for which constraint \eqref{eq:proof_brs_karma_constr} is not active, cannot be a solution if $a\neq 1$, because a non-null increase in $\bar{\mathbf{y}}_j$ with $j<i$ would still satisfy \eqref{eq:proof_brs_karma_constr} and decrease the cost \eqref{eq:proof_brs_karma_cost}. Nevertheless, if $a= 1$ and the origin satisfies \eqref{eq:proof_brs_karma_constr}, i.e., $k \geq k(j,1)$, then $\mathbf{\bar{y}_j^\star} = \mathbf{e_1}$. Henceforth, we are only interested in the intersections of the boundary hyperplane of  \eqref{eq:proof_brs_karma_constr} and the axes when $k \leq k(j,1)$. Such an intersection with  the axis of coordinate $\bar{\mathbf{y}}_i$ is defined by 
\begin{equation*}
\bar{\mathbf{y}}_i = \frac{( k-k(j,a))}{T(\mathbf{p}_i-\mathbf{p}_a)}.
\end{equation*}
Note that such intersection must satisfy 
\begin{equation}\label{eq:proof_brs_feas_region_a}
0 \leq  \frac{( k-k(j,a))}{T(\mathbf{p}_{i} -\mathbf{p}_a)} \leq 1.
\end{equation}
If $i < a$, \eqref{eq:proof_brs_feas_region_a} it is equivalent to $k(j,a)  \leq k \leq k(j,i)$. Conversely, if $i > a$  \eqref{eq:proof_brs_feas_region_a} it is equivalent to $ k(j,i) \leq  k \leq k(j,a)$. Given that $k(j,a) < k(j,i)$ for $i < a$ and $k(j,a) > k(j,i)$ for $i> a$, then  \eqref{eq:proof_brs_feas_region_a} is equivalent to $ \min(k(j,a),k(j,i)) \leq  k \leq \max(k(j,a),k(j,i))$. For fixed $a$, the cost \eqref{eq:proof_brs_karma_cost} is thus minimized at the intersection with the axis of the coordinate $\hat{j}_a$, given by \eqref{eq:j_hat}.
For $k < k(j,n)$, then the integer choice $j$ is unfeasible, according to Lemma~\ref{lem:feasibility}. Thus, for a fixed $j$, under feasibility, the solution to the LP  is 
\begin{equation*}
\mathbf{\bar{y}_j}^\star := \begin{cases}
	\mathbf{\bar{y}}^\star(j,\hat{a}), \, & k < k(j,1)\\
	\mathbf{e_1}, \, & k \geq k(j,1)
\end{cases},
\end{equation*}
where 
\begin{equation*}
\mathbf{\bar{y}}^\star({j,a}) :=    \frac{k-k(j,a)}{T(\mathbf{p}_{\hat{j}_a}-\mathbf{p}_a)}\mathbf{e_{\hat{j}_a}} -  \left(1-\frac{k-k(j,a)}{T(\mathbf{p}_{\hat{j}_a}-\mathbf{p}_a)} \right)\mathbf{e_a},
\end{equation*}
which can be rewritten as \eqref{eq:y_j_a_star}, and $\hat{a}$ is a component of $\bar{\mathbf{y}}$ that, when eliminated in the $n$ dimensional LP, places a solution along the axes of the space of the remaining variables. Therefore, $\hat{a}$ is given by \eqref{eq:a_hat}.

Now that, for each integer decision $\mathbf{y}^\star = \mathbf{e_j}$, 	a solution to the non-integer component is known and given by $\mathbf{\bar{y}_j}^\star$, the solution to the original MILP is the pair $(\mathbf{y}^\star = \mathbf{e_j},\mathbf{\bar{y}_j}^\star), j \in \{1,\ldots,n\}$ that achieves the lowest cost, which is given by \eqref{eq:singleAgentAverage_cost}. Consider two integer decisions $\mathbf{e_j}$ and $\mathbf{e_i}$ with $i,j\in\{1,\ldots,n\}: i<j$. If both are feasible, i.e., if $k \geq \max(0,\mathbf{p}_i,k(i,n))$, the decision of choosing $\mathbf{e_i}$ over $\mathbf{e_j}$ achieves lower or equal cost when $s/\bar{s} \leq \gamma_{i,j}$, where
\begin{equation*}
\gamma_{i,j} := \frac{T\mathbf{d}(\mathbf{x})^\top(\mathbf{\bar{y}_i}^\star-\mathbf{\bar{y}_j}^\star)}{\mathbf{d}_i(\mathbf{x}_i)-\mathbf{d}_j(\mathbf{x}_j)}.
\end{equation*}
If $k \geq \max(0,\mathbf{p}_i,k(i,n))$, the convention  $\gamma_{i,j} := \infty$ is adopted, yielding \eqref{eq:gamma_ij}. Then $\mathbf{e_{j^\star}}$ is an integer solution if $s/\bar{s}$ is such that $j$ is chosen over every $i \in \{j+1,\ldots,n\}$, i.e., $s/\bar{s} \geq \underline{\gamma}_j$, it is chosen over every $i \in \{1,\ldots,j-1\}$, i.e., $s/\bar{s} \leq  \bar{\gamma}_j $, and $\bar{\gamma}_j < \underline{\gamma}_j , s/\bar{s} < \underline{\gamma}_j$. This proves the sufficiency of the conditions for $j$ to be a solution. The necessity is immediate since, if $ \bar{\gamma}_j < \underline{\gamma}_j , s/\bar{s} < \underline{\gamma}_j$, or {$s/\bar{s} >  \bar{\gamma}_j$}, $\mathbf{e_j}$ cannot be an integer solution. 


\bibliographystyle{IEEEtran}
\bibliography{bib/main.bib,bib/SML_papers.bib}

\end{document}